\def\BibTeX{{\rm B\kern-.05em{\sc i\kern-.025em b}\kern-.08em
		T\kern-.1667em\lower.7ex\hbox{E}\kern-.125emX}}
\newtheorem{theorem}{Theorem}
\newtheorem{assumption}{Assumption}
\newtheorem{lemma}{Lemma}
\begin{document}
\bstctlcite{IEEEexample:BSTcontrol}
\title{{Artificial Delay Based ARC of a Class of Uncertain EL Systems with Only Position Feedback}}

\author{Spandan~Roy,~Indra~Narayan~Kar,~\IEEEmembership{Senior~Member,~IEEE}, ~Jinoh Lee,~\IEEEmembership{Member,~IEEE},\\
	Nikos Tsagarakis,~\IEEEmembership{Member,~IEEE}~and~Darwin~G.~Caldwell, \IEEEmembership{Member, IEEE} 
	\thanks{S. Roy and I. N. Kar are with the Department
		of Electrical Engineering, Indian Institute of Technology Delhi, New Delhi,
		India {(e-mail:} sroy002@gmail.com, ink@ee.iitd.ac.in).
	}
	\thanks{J. Lee, N. Tsagarakis and D. G. Caldwell are with the Advanced Robotics Department,
		Fondazione Istituto Italiano di Tecnologia (IIT), Genoa,
		Italy {(e-mail: \{jinoh.lee, nikos.tsagarakis, Darwin.Caldwell\}}@iit.it).}
} 

\maketitle

\begin{abstract}
	In this paper, the tracking control problem of {an Euler-Lagrange system is addressed with regard to parametric uncertainties,} and an adaptive-robust control strategy, {{christened} 
		Time-Delayed Adaptive Robust Control (TARC), is presented}. TARC approximates the unknown dynamics through the time-delayed estimation, and the adaptive-robust control provides robustness against the approximation error. The novel adaptation law of TARC, in contrast to the conventional adaptive-robust control methodologies, requires neither complete model of the system nor any knowledge of predefined uncertainty bounds to compute the switching gain, and circumvents the over- and under-estimation problems of the switching gain. {Moreover, TARC only utilizes position feedback and approximates the velocity and acceleration terms from the past position data. The adopted state-derivatives estimation method in TARC avoids any explicit requirement of external low pass filters for the removal of measurement noise. A new stability notion in continuous-time domain is proposed considering the time delay, adaptive law, and state-derivatives estimation which in turn provides a selection criterion for gains and sampling interval of the controller}.
\end{abstract}
\begin{IEEEkeywords}
	Adaptive-robust control, Euler-Lagrange system, Time-delayed control, State-derivatives estimation. 
\end{IEEEkeywords}

\section*{Nomenclature}
\begin{IEEEdescription}[\IEEEusemathlabelsep\IEEEsetlabelwidth{$\lambda _{\min}(\bullet)$}]
	\item[$\varphi_{h}$] $\varphi(t)$ is delayed by an amount $h$ as $\varphi(t-h)$.
	\item[$\lambda _{\min}(\bullet)$]  Minimum eigen value of $(\bullet)$.
	\item[$|| (\bullet) \Vert$] Euclidean norm of $(\bullet)$.
	\item[$\mathbf{I}$] Identity matrix of proper dimension.
	\item[$\text{sgn}$] Standard signum function.
\end{IEEEdescription}

\section{Introduction}
\subsection{Background and Motivation}\label{intro}
\IEEEPARstart{D}{esign} of an efficient controller for nonlinear systems subjected to parametric and nonparametric uncertainties has always been a challenging task. Among many other approaches, adaptive control and robust control are the two popular control strategies that researchers have extensively employed while dealing with uncertain nonlinear systems. {While adaptive control estimates the unknown system parameters and controller gains online \cite{Ref:2}, a robust controller such as classical Sliding Mode Control (SMC) provides robustness against system uncertainties within a predefined uncertainty bound \cite{Ref:3}. However, defining a prior uncertainty bound is not always possible due to the unmodelled dynamics and external disturbances. Moreover, to increase the operating region of SMC, often higher values of uncertainty bounds are assumed. This in turn leads to overestimation of switching gain and reduces controller accuracy of the robust controller~\cite{Ref:4}.} 

Recently, the global research is reoriented towards adaptive-robust control (ARC), where attempts are made to reap the benefits of both the adaptive and robust control methods by applying them simultaneously. The series of publications \cite{Ref:2}, \cite{Ref:13,Ref:16,Ref:17,Ref:18,Ref:19} regarding ARC, estimate the uncertain terms online based on a predefined projection function. However, usage of projection function requires upper and lower bounds of individual uncertain parameters, which is not always possible in practice. The adaptive sliding mode control (ASMC) based designs \cite{Ref:20,Ref:20-1,Ref:20-2,Ref:20-3,Ref:32,Ref:21,Ref:22} adapt the switching gain online without any predefined knowledge of the bound of uncertainty. The benefits of such designs over \cite{Ref:2}, \cite{Ref:13,Ref:16,Ref:17,Ref:18,Ref:19} are: (i) rather than adapting to each of the multiple uncertain system parameters, it is sufficient to adapt only a single parameter, the switching gain and (ii) knowledge of the uncertainty bound is not a prerequisite. Nevertheless, the adaptive laws reported in \cite{Ref:20,Ref:20-1,Ref:20-2,Ref:20-3,Ref:32} make the switching gain a monotonically increasing function. Thus, the controllers become susceptible to very high switching gain and consequent chattering \cite{Ref:33}. 

{The ASMC proposed in \cite{Ref:21,Ref:22} overcome {the} monotonic nature of switching gain}. However, the adaptive laws of \cite{Ref:21,Ref:22} involve a predefined threshold value; it is worth to notice, until the threshold value is achieved, the switching gain may still be increasing (resp. decreasing) even if the tracking error decreases (resp. increases) and thus creates \textit{overestimation} (resp. \textit{underestimation}) problem of the switching gain. While the underestimation problem compromises control accuracy by providing lower values of the switching gain than the required amount, the overestimation problem demands excessive control input by providing higher values of the switching gain than the required amount. Especially, the overestimation problem may invite chattering if the switching gain becomes too high \cite{Ref:4,Ref:33}. Furthermore, 
the controllers in \cite{Ref:21,Ref:22} assume that the nominal absolute values of all uncertain parameters are greater than the corresponding perturbation terms. Such assumption necessitates complete modelling of the system which is not possible in the presence of unmodeled dynamics. 

{To avoid complete prior knowledge of the system model, researchers have applied 
	Time-Delayed Control (TDC)\cite{Ref:6,Ref:6-2,Ref:26,Ref:new1} to approximate uncertain system dynamics. 
	The time-delayed estimation (TDE) method in TDC approximates the lumped system uncertainty by only using control input and state information of the immediate past time instant and the design process does not require expertise knowledge.} In spite of this, the unattended approximation error, commonly termed as TDE error causes detrimental effect to the performance of the closed loop system and its stability. In this front, a few works have been carried out to tackle the TDE error which include internal model \cite{Ref:7}, ideal velocity feedback \cite{Ref:9,Ref:9_1}, nonlinear damping \cite{Ref:10}, and conventional SMC \cite{Ref:12,Ref:34}. 

{It is {worthwhile to notice} that Fuzzy logic based adaptive control techniques have also been exploited \cite{Ref:faltu}. {Nevertheless}, as observed in \cite{Ref:jin}, classical TDC has less computation burden compared to such black-box technique. 
However, all the {TDE-}based works necessitate state-derivatives feedbacks for Euler-Lagrange (EL) systems.}
In the absence of {the state-derivative terms (more specifically,} velocity and acceleration feedback in TDE-based {controllers} for EL systems), numerical differentiation for state-derivatives, computed from noisy state data (i.e., position information for EL systems) often invites considerable measurement error, which degrades controller performance \cite{Ref:7,Ref:6-1}. {Nevertheless, the effect of numerical approximations of the state-derivatives on the stability of the overall system is not considered in }\cite{Ref:7,Ref:6-1}. Furthermore, as observed in \cite{Ref:7}, the usage of a low pass filter (LPF) in TDC to mitigate the effect of measurement noise has pervasive effects on system stability as well as controller performance. {Yet again, the choice of the time-delay value, i.e., sampling time, for TDE-based controllers and its effect on the stability of the system are important issues. The authors in} \cite{Ref:dis_tdc_1,Ref:dis_TDC_2} {report the choice of sampling time and the corresponding stability analysis in discrete time domain. However, all the reported TDE based controllers} \cite{Ref:jin,Ref:6,Ref:6-2,Ref:26,Ref:new1,Ref:7,Ref:9,Ref:9_1,Ref:10,Ref:12,Ref:34} {select the controller gains and the delay value independently which is conservative in nature, and the impact of such selections on the overall system stability has not been sufficiently analyzed in the continuous-time domain}. Moreover, \cite{Ref:12,Ref:34} require predefined bound on TDE error; although the ARC laws in \cite{Ref:21,Ref:22} do not require any prior knowledge of the uncertainty bound, its adaptive laws suffer from the over- and under-estimation problems. Hence, it is imperative to formulate a controller which can address the aforementioned individual issues of both the conventional continuous-time TDC and TDE based controllers as well as ARCs.

\subsection{Contributions}\label{sec 1.2}
{ In this article, an adaptive-robust control strategy, Time-Delayed Adaptive Robust Control (TARC) has been formulated for a class of uncertain EL systems which provides an integrated and comprehensive solution to the existing issues of TDE based controllers \cite{Ref:jin,Ref:6,Ref:6-2,Ref:26,Ref:new1,Ref:7,Ref:9,Ref:9_1,Ref:10,Ref:12,Ref:34} as well as ARCs \cite{Ref:21,Ref:22}. The proposed TARC approximates the unknown system dynamics by the TDE method and provides robustness against the TDE error by an adaptive-robust control. 
	
	The main contribution of the proposed TARC is its novel adaptive law which does not involve any threshold values and thus, alleviates the over- and under-estimation problems of the switching gain compared to \cite{Ref:21,Ref:22}. Additionally, to enhance the practical applicability of TARC, the state-derivatives estimation technique \cite{Ref:30} is adopted in TARC which avoids any explicit requirement of velocity and acceleration information for an EL system. The state-derivatives estimation procedure \cite{Ref:30} itself has noise suppressing capability which eliminates any requirement of external LPFs; thus, TARC can avoid the pervasive effect of LPF commonly seen on the stability of TDE-based control \cite{Ref:7}. This paper also offers complete continuous-time domain stability analysis of the proposed TARC considering the time delay component in the controller, the adaptive law, and state-derivatives estimation. 
	It indeed establishes an analytical procedure to give a selection criterion for controller gains and sampling time. }

\subsection{Organization}
The rest of the paper is organized as follows: 
In Section~\ref{sec: 2}, the design issues of TDC is first clarified and a new stability analysis of TDC is provided. This is followed by the proposed adaptive-robust control law. {The stability analysis and parameter selection of TARC are separately provided in Section~III. }Section IV concludes the entire work.

\section{Controller Design}\label{sec: 2}

\subsection{Time-Delayed Control and A New Stability Analysis}\label{sec: 2.1}
This subsection revisits the robust control scheme known as TDC for uncertain Euler-Lagrange (EL) systems 
and provides a new stability analysis of TDC in the sense of Lyapunov. This subsection provides a building block for the proposed controller in the next subsection \ref{sec: 2.2}. 

In general, an EL system with second order dynamics, devoid of any delay, can be written as
\begin{equation}\label{sys}
\mathbf{ M(q)}\ddot{\mathbf q}+\mathbf N(\mathbf q,\dot{\mathbf q})=\boldsymbol\tau,
\end{equation}
where $\mathbf{q}(t)\in\mathbb{R}^{n}$ denotes position for EL system, $\boldsymbol\tau\in\mathbb{R}^{n}$ is the control input, $\mathbf{M(q)}\in\mathbb{R}^{n\times n}$  is the mass/inertia matrix and ${\mathbf N(\mathbf q,\dot{\mathbf q})}\in\mathbb{R}^{n}$ denotes combination of other system dynamics terms based on system properties such as Coriolis, gravitational, friction, damping forces. The control input is defined to be
\begin{equation}\label{input}
\boldsymbol\tau={\hat{\mathbf M} \mathbf u+\hat{\mathbf N}},
\end{equation}
where $\mathbf{u}$ is the auxiliary control input; $\mathbf{\hat{N}}$ is the nominal value of $\mathbf{N}$ and {$\mathbf{\hat{M}}$ is a constant matrix selected from the nominal knowledge of $\mathbf M$ \cite{Ref:33,Ref:6,Ref:6-2}}. To reduce the modelling effort of the systems, $\mathbf{\hat{N}}$ can be approximated from the input-output data of previous instant using the time-delayed estimation (TDE) method \cite{Ref:6,Ref:6-2,Ref:33} and the system definition (\ref{sys}) as

\begin{equation}\label{approx}
{\hat{\mathbf N}( \mathbf q,\dot{\mathbf q} )\cong} \mathbf{N}(\mathbf{q}_h,{\dot{\mathbf q}}_h)=\boldsymbol\tau_h-{\hat{\mathbf M}\ddot{\mathbf q}}_h,
\end{equation}
where $h>0$ is a fixed small delay time. Substituting (\ref{input}) and (\ref{approx}) in (\ref{sys}), the system dynamics is converted into an input as well as state delayed dynamics as
\begin{equation}\label{sys new}
{\hat{\mathbf M}\ddot{\mathbf q}+\bar{\mathbf N}({\mathbf q},\dot{\mathbf q},\ddot{\mathbf q},\ddot{\mathbf q}}_h)=\boldsymbol\tau_h,
\end{equation}
where ${\bar{\mathbf N}=(\mathbf M-\hat{\mathbf M})\ddot{\mathbf q}+\hat{\mathbf M}\ddot{\mathbf q}}_h-{\hat{\mathbf M}\mathbf u+ \mathbf N}$. Let $\mathbf{q}^d(t)$ be the desired trajectory to be tracked and $\mathbf{e}_1(t)=\mathbf{q}(t)-\mathbf{q}^d(t)$ is the tracking error. The auxiliary control input $\mathbf{u}$ is defined in the following way
\begin{equation}\label{aux}
{\mathbf u}(t)={\ddot{\mathbf q}}^d(t)-{\mathbf K_2\dot{\mathbf e}_1}(t)-{\mathbf K_1 \mathbf e_1}(t),
\end{equation}
where ${\mathbf K_1}$ and ${\mathbf K_2}$ are two positive definite matrices with appropriate dimensions. Putting (\ref{aux}) and (\ref{input}) in (\ref{sys new}), the following error dynamics is obtained
\begin{equation}
{\dot{\mathbf e}=\mathbf A_1 \mathbf e+\mathbf B_1 \mathbf e}_h+\mathbf{B} \boldsymbol\sigma_1,\label{new err dyn 1}
\end{equation}
where $\mathbf{e}=\begin{bmatrix}
{\mathbf e_1}\\
{\dot{\mathbf e}_1}
\end{bmatrix}$, ${\mathbf A_1}=\begin{bmatrix}
\mathbf{0} & \mathbf{I} \\
\mathbf{0} & \mathbf{0}
\end{bmatrix} 
, 
{\mathbf B_1}=\begin{bmatrix}
\mathbf{0} & \mathbf{0}\\
-{\mathbf K_1} & -{\mathbf K_2}
\end{bmatrix}$, $\mathbf{B}=\begin{bmatrix}
\mathbf{0}\\
\mathbf{I}
\end{bmatrix}$ and $\boldsymbol\sigma_1={\hat{\mathbf M}}^{-1}{(\hat{\mathbf N}}_h-{\bar{\mathbf N})+\ddot{\mathbf q}}^d_h-{\ddot{\mathbf q}}^d$ is treated as the overall uncertainty or TDE error. 

{{Herein,} the term \textit{uncertainty} denotes {perturbation
		due to parametric variations and bounded external disturbance torque} while considering that the external disturbance does not affect {the observability of the original system} \cite{Ref:6,Ref:6-2}}. {Note that 
	\[\mathbf e_h=\mathbf{e}(t-h)=\mathbf{e}(t)-\int_{-h}^0 {\dot{\mathbf e}}(t+\theta)\mathrm{d}\theta,\] 
	where} the derivative inside the integral is with respect to $\theta$, (\ref{new err dyn 1}) can be further modified as
\begin{align}
{\dot{\mathbf e}}(t)&=\mathbf{Ae}(t)-{\mathbf B_1}\int_{-h}^0 {\dot{\mathbf e}}(t+\theta)\mathrm{d}\theta+\mathbf{B}\boldsymbol\sigma_1,\label{error dyn delayed}
\end{align}
where ${\mathbf A = \mathbf A_1 + \mathbf B_1}$. It is assumed that the choice of controller gains ${\mathbf K_1}$ and ${\mathbf K_2}$ makes the matrix $\mathbf{A}$ Hurwitz which is always possible. 

{It is to be noted that the original system~\eqref{sys} is delay-free. However in TDC, the time delay $h$ in (\ref{approx}) is {artificially introduced on purpose} 
	to approximate the term $\mathbf N$ using the past time-delayed input and state information, which indeed reduces the modeling effort.} 
\begin{assumption}\label{assum 2}
	The desired trajectories are selected in a way such that $\mathbf{q}^d, {\dot{\mathbf q}}^d, {\ddot{\mathbf q}}^d \in \mathcal{L}_{\infty}$.
\end{assumption}

\begin{lemma}
	\label{lemma}
	TDE error $\boldsymbol \sigma_1$ remains bounded for the system~(\ref{sys}) if $\hat{\mathbf M}$ is selected in a way such that the following condition holds \cite{Ref:6,Ref:6-2,Ref:33}:
	\begin{equation}
	\Vert\mathbf{M}^{-1}{(\mathbf q)\hat{\mathbf M}- \mathbf I}\Vert <1. \label{mass}
	\end{equation}
\end{lemma}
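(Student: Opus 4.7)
The plan is to expose a contraction structure in $\boldsymbol\sigma_1$ whose coefficient is controlled by the hypothesis (\ref{mass}). Once the TDE error is shown to obey a recursion of the schematic form $\boldsymbol\sigma_1(t) \approx (\mathbf I - \hat{\mathbf M}^{-1}\mathbf M)\,\boldsymbol\sigma_1(t-h) + \mathbf r(t)$ with bounded residual $\mathbf r$, a standard geometric-series / small-gain argument forces $\|\boldsymbol\sigma_1\|$ to remain bounded.

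First, I would simplify $\bar{\mathbf N}$. Substituting $\mathbf N = \boldsymbol\tau - \mathbf M\ddot{\mathbf q}$ from (\ref{sys}) and $\boldsymbol\tau = \hat{\mathbf M}\mathbf u + \hat{\mathbf N}$ from (\ref{input}) into the definition of $\bar{\mathbf N}$ given just after (\ref{sys new}), the $\hat{\mathbf M}\mathbf u$ and $\mathbf M\ddot{\mathbf q}$ terms cancel, leaving the clean identity $\bar{\mathbf N} = \hat{\mathbf M}(\ddot{\mathbf q}_h - \ddot{\mathbf q}) + \hat{\mathbf N}$. This reduces $\boldsymbol\sigma_1$ to
\begin{equation*}
\boldsymbol\sigma_1 = \hat{\mathbf M}^{-1}(\hat{\mathbf N}_h - \hat{\mathbf N}) + (\ddot{\mathbf q} - \ddot{\mathbf q}_h) + (\ddot{\mathbf q}^d_h - \ddot{\mathbf q}^d).
\end{equation*}
Second, I would use the TDE definition (\ref{approx}), $\hat{\mathbf N} = \boldsymbol\tau_h - \hat{\mathbf M}\ddot{\mathbf q}_h$, together with (\ref{sys}) evaluated at $t-h$, to obtain $\hat{\mathbf N} - \mathbf N_h = (\mathbf M_h - \hat{\mathbf M})\ddot{\mathbf q}_h$. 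Applying this same identity one delay window back and substituting $\ddot{\mathbf q} = \mathbf u + (\ddot{\mathbf q} - \mathbf u)$ recasts $\boldsymbol\sigma_1$ into the recursion stated above, in which the residual $\mathbf r(t)$ collects only smooth increments of the form $\mathbf M(t) - \mathbf M_h$, $\mathbf N(t) - \mathbf N_h$ and $\ddot{\mathbf q}^d_h - \ddot{\mathbf q}^d$. By Assumption \ref{assum 2} and continuity of $\mathbf M$ and $\mathbf N$ in the state, $\mathbf r$ is uniformly bounded whenever the closed-loop trajectory is.

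Taking norms and setting $\rho := \|\mathbf I - \hat{\mathbf M}^{-1}\mathbf M\|$, iteration of the recursion over successive delay windows yields the geometric-series estimate $\|\boldsymbol\sigma_1\|_\infty \le \|\mathbf r\|_\infty/(1-\rho)$ provided $\rho < 1$, which is essentially the content of (\ref{mass}). The main obstacle, and the reason the formal proof is subtle enough to warrant the citations \cite{Ref:6,Ref:6-2,Ref:33}, is the circularity between $\boldsymbol\sigma_1$ and $\mathbf r$: the residual depends on $\ddot{\mathbf q}$ and $\mathbf u$, which through the error dynamics (\ref{error dyn delayed}) are themselves driven by $\boldsymbol\sigma_1$. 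The standard remedy is a bootstrap argument: start from the prescribed initial history on $[-h,0]$, propagate the bound one delay interval at a time, and exploit $\rho < 1$ together with a sufficiently small $h$ so that the uniform bound does not escape in finite time.
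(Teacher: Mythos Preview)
The paper does not actually prove Lemma~\ref{lemma}; it is stated with citations to \cite{Ref:6,Ref:6-2,Ref:33} and then immediately used to assert the existence of a bound $c$ with $\|\boldsymbol\sigma_1\|\le c$. Your sketch therefore cannot be compared against a proof in this paper, but it does track the classical contraction argument found in those references: rewrite the TDE error as a delayed recursion whose linear part has norm strictly below one under (\ref{mass}), then sum the resulting geometric series.

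Two points are worth flagging. First, your contraction coefficient is stated as $\rho=\|\mathbf I-\hat{\mathbf M}^{-1}\mathbf M\|$, whereas (\ref{mass}) bounds $\|\mathbf M^{-1}\hat{\mathbf M}-\mathbf I\|$. These are not the same matrix in general; the standard derivation in \cite{Ref:6,Ref:6-2} naturally produces the latter (multiply the plant equation on the left by $\mathbf M^{-1}$, not $\hat{\mathbf M}^{-1}$), so your ``essentially'' hides a real mismatch that you should resolve by redoing the algebra starting from $\ddot{\mathbf q}=\mathbf M^{-1}(\hat{\mathbf M}\mathbf u+\hat{\mathbf N}-\mathbf N)$. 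Second, you correctly identify the circularity between $\boldsymbol\sigma_1$ and the residual $\mathbf r$ (which depends on $\ddot{\mathbf q}$), and you propose a step-by-step bootstrap over delay intervals. That is indeed the standard remedy, but note that the cited works typically close this loop only under additional structural assumptions (bounded inertia, bounded $\mathbf N$ growth, and small enough $h$) that are tacit here; the paper sidesteps this entirely by invoking the lemma as an imported fact.
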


Since $\boldsymbol{\sigma}_1$ remains bounded when (\ref{mass}) is satisfied from Lemma \ref{lemma}, $\exists c \in \mathbb{R}^{+}$ such that $\Vert\boldsymbol{\sigma}_1\Vert \leq c$. The term $c$ is considered only for analytical purpose and it is not utilized to design controller in this paper.

A new stability criterion of TDC, based on the Lyapunov-Krasvoskii method, is presented through Theorem 1 which presents a selection criterion and relation between the controller gains ${\mathbf K_1},{\mathbf K_2}$ and delay $h$. 

\begin{theorem}
	The system (\ref{sys new}) employing the control input (\ref{input}), having auxiliary control input (\ref{aux}) is Uniformly Ultimately Bounded (UUB) if the selection of controller gains and delay time satisfy the following condition:
	\begin{equation}\label{delay value}
	\mathbf{\Psi}=\begin{bmatrix}
	{\mathbf Q- \mathbf E}-(1+\xi)\frac{h^2}{\beta}\mathbf{D} & \mathbf{0}\\ 
	\mathbf{0} & (\xi-1)\frac{h^2}{\beta}\mathbf{D}
	\end{bmatrix}>\mathbf{0},
	\end{equation}
	
	\noindent where ${\mathbf E= \beta \mathbf P \mathbf B_1( \mathbf A_1 \mathbf D}^{-1}{\mathbf A}_1^T+{\mathbf B_1 \mathbf D}^{-1}{\mathbf B}_1^{T}+\mathbf{D}^{-1} ){\mathbf B}_1^{T}\mathbf{P}$, $\xi>1$ and $\beta>0$ are scalars, and $\mathbf{P}>\mathbf{0}$ is the solution of the Lyapunov equation $\mathbf{A}^T\mathbf{P}+{\mathbf P \mathbf A = - \mathbf Q}$ for some $\mathbf{Q}>\mathbf{0}$. 
\end{theorem}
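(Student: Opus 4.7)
The plan is a Lyapunov--Krasovskii analysis grounded in the integral representation~(\ref{error dyn delayed}). I would take
\[
V(t)=\mathbf{e}^T(t)\mathbf{P}\mathbf{e}(t)+V_{\mathrm{int}}(t),
\]
where $V_{\mathrm{int}}$ is a sum of two double-integral terms of the form $\int_{-h}^{0}\!\int_{t+\theta}^{t}(\cdot)^T\mathbf{D}(\cdot)\,ds\,d\theta$, one on the current delay window $[t-h,t]$ and one on the shifted window $[t-2h,t-h]$, scaled by $h/\beta$ and split asymmetrically via the parameter $\xi>1$. These are inserted precisely so that the single-integral residues produced later by Jensen's inequality will be cancelled. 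Differentiating $V_1=\mathbf{e}^T\mathbf{P}\mathbf{e}$ along~(\ref{error dyn delayed}) and invoking $\mathbf{A}^T\mathbf{P}+\mathbf{P}\mathbf{A}=-\mathbf{Q}$ gives
\[
\dot V_1=-\mathbf{e}^T\mathbf{Q}\mathbf{e}-2\mathbf{e}^T\mathbf{P}\mathbf{B}_1\!\int_{-h}^{0}\!\dot{\mathbf{e}}(t+\theta)\,d\theta+2\mathbf{e}^T\mathbf{P}\mathbf{B}\boldsymbol{\sigma}_1,
\]
with $\|\boldsymbol{\sigma}_1\|\le c$ available from Lemma~\ref{lemma}.

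The heart of the argument is extracting $\mathbf{E}$ from the delayed cross term. I would apply the matrix Young inequality $-2\mathbf{u}^T\mathbf{v}\le\beta\mathbf{u}^T\mathbf{R}\mathbf{u}+\beta^{-1}\mathbf{v}^T\mathbf{R}^{-1}\mathbf{v}$ with $\mathbf{u}=\mathbf{B}_1^T\mathbf{P}\mathbf{e}$, $\mathbf{v}=\int_{-h}^{0}\dot{\mathbf{e}}(t+\theta)\,d\theta$, and the tailored weight
\[
\mathbf{R}=\mathbf{A}_1\mathbf{D}^{-1}\mathbf{A}_1^T+\mathbf{B}_1\mathbf{D}^{-1}\mathbf{B}_1^T+\mathbf{D}^{-1},
\]
so that the $\beta\mathbf{u}^T\mathbf{R}\mathbf{u}$ piece is exactly $\mathbf{e}^T\mathbf{E}\mathbf{e}$ by construction. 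The residual $\beta^{-1}\mathbf{v}^T\mathbf{R}^{-1}\mathbf{v}$ is then handled by substituting the pointwise form~(\ref{new err dyn 1}) into $\mathbf{v}$ and invoking Jensen's inequality $\bigl(\int_{-h}^{0}\phi(t+\theta)\,d\theta\bigr)^T\!\mathbf{D}\bigl(\int_{-h}^{0}\phi(t+\theta)\,d\theta\bigr)\le h\int_{t-h}^{t}\phi^T(s)\mathbf{D}\phi(s)\,ds$ on the three resulting integrals of $\mathbf{e}(t+\theta)$, $\mathbf{e}(t+\theta-h)$, and $\boldsymbol{\sigma}_1(t+\theta)$. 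The three-term structure of $\mathbf{R}$ is chosen precisely so that this substitution yields single integrals of pointwise quadratic forms in $\mathbf{e}$ and $\mathbf{e}_h$ on the two nested delay windows, plus a bounded $\boldsymbol{\sigma}_1$ contribution.

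These single integrals are exactly what $\dot V_{\mathrm{int}}$ cancels: each double integral contributes a matching $-\int$ term together with a pointwise $\tfrac{h^2}{\beta}\mathbf{e}^T\mathbf{D}\mathbf{e}$ or $\tfrac{h^2}{\beta}\mathbf{e}_h^T\mathbf{D}\mathbf{e}_h$ factor, and the asymmetric split via $\xi>1$ produces precisely the coefficients $(1+\xi)$ and $(\xi-1)$ in the two diagonal blocks of $\boldsymbol\Psi$. A final Young's step on $2\mathbf{e}^T\mathbf{P}\mathbf{B}\boldsymbol{\sigma}_1$ together with $\|\boldsymbol{\sigma}_1\|\le c$ absorbs the perturbation into a constant $\mu>0$. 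Collecting everything gives
\[
\dot V\le -[\mathbf{e}^T\;\mathbf{e}_h^T]\,\boldsymbol\Psi\,[\mathbf{e}^T\;\mathbf{e}_h^T]^T+\mu,
\]
so that $\boldsymbol\Psi>\mathbf{0}$ yields $\dot V\le -\lambda_{\min}(\boldsymbol\Psi)\|\mathbf{e}\|^2+\mu$; UUB of $\mathbf{e}$ then follows by the standard comparison argument combined with $V\ge\lambda_{\min}(\mathbf{P})\|\mathbf{e}\|^2$.

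The main obstacle I anticipate is not any individual inequality but the coordinated choice of (i)~the three-term weighting matrix $\mathbf{R}$ that pre-packages $\mathbf{E}$, (ii)~the precise form and weights of the two double integrals making up $V_{\mathrm{int}}$, and (iii)~the split factor $\xi$, so that the residual single integrals cancel exactly and the surviving pointwise terms reassemble into the specific block structure of $\boldsymbol\Psi$ given in~(\ref{delay value}). Everything else (the Hurwitz-based quadratic bound, Jensen, and the eigenvalue comparison) is routine.
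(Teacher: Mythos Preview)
Your overall strategy---Lyapunov--Krasovskii functional, Young's inequality to isolate $\mathbf{E}$, Jensen to convert integral squares into single integrals, then cancellation via the derivative of the auxiliary integral terms---is exactly the paper's. But two concrete steps in your plan do not go through as written.

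\textbf{(i) The functional needs a third, single-integral term.} With only the two double integrals $V_2,V_3$ (on $[t-h,t]$ and $[t-2h,t-h]$, each weighted $h/\beta$), their derivatives contribute $\tfrac{h^2}{\beta}\mathbf{e}^T\mathbf{D}\mathbf{e}$ and $\tfrac{h^2}{\beta}\mathbf{e}_h^T\mathbf{D}\mathbf{e}_h$ together with the desired $-\tfrac{h}{\beta}\int(\cdot)$ cancellations. That forces the $(2,2)$ block of $\boldsymbol\Psi$ to be $-\tfrac{h^2}{\beta}\mathbf{D}<\mathbf{0}$, and no ``asymmetric split'' of the two double integrals fixes this without destroying the integral cancellation. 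The paper adds
\[
V_4=\xi\,\tfrac{h^2}{\beta}\int_{t-h}^{t}\mathbf{e}^T(\psi)\mathbf{D}\mathbf{e}(\psi)\,\mathrm{d}\psi,
\]
whose derivative $\xi\tfrac{h^2}{\beta}(\mathbf{e}^T\mathbf{D}\mathbf{e}-\mathbf{e}_h^T\mathbf{D}\mathbf{e}_h)$ shifts weight between the diagonal blocks and produces exactly $(1+\xi)$ and $(\xi-1)$.

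\textbf{(ii) Young's inequality must be applied after expanding $\dot{\mathbf{e}}$, not before.} Applying Young once with $\mathbf{u}=\mathbf{B}_1^T\mathbf{P}\mathbf{e}$, $\mathbf{v}=\int_{-h}^0\dot{\mathbf{e}}(t+\theta)\,\mathrm{d}\theta$ and weight $\mathbf{R}$ does give $\mathbf{e}^T\mathbf{E}\mathbf{e}$ on the $\beta$-side, but the residual is $\beta^{-1}\mathbf{v}^T\mathbf{R}^{-1}\mathbf{v}$ with $\mathbf{v}$ a \emph{sum} of three integrals. Since $\mathbf{R}^{-1}$ has no useful decomposition, you cannot reduce this to the separate $\tfrac{h}{\beta}\int\mathbf{e}^T\mathbf{D}\mathbf{e}$ and $\tfrac{h}{\beta}\int\mathbf{e}_h^T\mathbf{D}\mathbf{e}_h$ terms you need; the claim that ``the three-term structure of $\mathbf{R}$'' makes this work is not right. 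The paper instead first substitutes~(\ref{new err dyn 1}) for $\dot{\mathbf{e}}(t+\theta)$, then applies Young's inequality \emph{three times} (once per cross term) with the single weight $\mathbf{D}$; the three $\beta$-pieces sum to $\mathbf{e}^T\mathbf{E}\mathbf{e}$, and each $1/\beta$-piece is already of the form $(\int\phi)^T\mathbf{D}(\int\phi)$ so Jensen applies directly.

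With those two corrections your plan coincides with the paper's proof. The final perturbation term $2\hat{\mathbf{s}}^T\boldsymbol\sigma_1$ is handled in the paper by the linear bound $2\iota c\|\bar{\mathbf{e}}\|$ rather than an extra Young split, but either route yields UUB once $\boldsymbol\Psi>\mathbf{0}$.
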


\begin{proof}
	Let us consider the following Lyapunov function:
	\begin{equation}\label{lyapunov}
	V=V_1(\mathbf{e})+V_2(\mathbf{e})+V_3(\mathbf{e})+V_4(\mathbf{e}),
	\end{equation} 
	\noindent where
	\begin{align*}
	V_1(\mathbf{e})&={\mathbf e^T \mathbf P \mathbf e},\\ V_2(\mathbf{e})&=\frac{h}{\beta}\int_{-h}^{0}\int_{t+\theta}^{t}\mathbf{e}^T(\psi )\mathbf{De}(\psi )\mathrm{d}\psi \mathrm{d}\theta,\\
	V_3(\mathbf{e})&=\frac{h}{\beta}\int_{-h}^{0}\int_{t+\theta}^{t}\mathbf{e}^T(\psi-h )\mathbf{De}(\psi-h )\mathrm{d}\psi \mathrm{d}\theta,\\
	V_4(\mathbf{e})&=\xi \frac{h^2}{\beta}\int_{t-h}^{t}\mathbf{e}^T(\psi)\mathbf{De}(\psi)\mathrm{d}\psi.
	\end{align*}
	Using (\ref{error dyn delayed}), the time derivative  of $V_1(\mathbf{e})$ yields
	\begin{align}\label{lya_dot for time delay}
	\dot{V}_1(\mathbf{e})&=-{\mathbf e^T \mathbf Q \mathbf e} - 2\mathbf{e}^{T}{\mathbf P \mathbf B_1}\int_{-h}^{0}[ {\mathbf A_1 \mathbf e}(t+\theta ) \nonumber\\
	& +{\mathbf B_1 \mathbf e}(t-h+\theta )+\mathbf{B}\boldsymbol\sigma _1(t+\theta ) ]\mathrm{d}\theta   +2 {\hat{\mathbf s}}^T \boldsymbol\sigma_1 ,
	\end{align}
	where ${\hat{\mathbf s}=\mathbf B}^T\mathbf{Pe}$. 
	For any two non-zero vectors ${\mathbf z_1}$ and ${\mathbf z_2}$, there exists a scalar $\beta>0$ and matrix $\mathbf{D}>\mathbf{0}$ such that 
	\begin{equation}\label{ineq 2}
	\pm 2 {\mathbf z}_1^T {\mathbf z_2}\leq \beta {\mathbf z}_1^{T}\mathbf{D}^{-1}{\mathbf z_1}+(1/\beta ){\mathbf z}_2^{T}{\mathbf D \mathbf z_2}. 
	\end{equation}
	Using Jensen's inequality, the following inequality holds \cite{Ref:31}:
	\begin{align}
	\int_{-h}^{0}\mathbf{e}^{T}(\psi)\mathbf{De}(\psi)\mathrm{d}\psi \geq \frac{1}{h}\int_{-h}^{0}\mathbf{e}^{T}(\psi)\mathrm{d}\psi \mathbf{D} \int_{-h}^{0}\mathbf{e}(\psi)\mathrm{d}\psi.\label{lemma 1}
	\end{align}
	Applying (\ref{ineq 2}) and (\ref{lemma 1}), the followings are obtained:
	\begin{align}
	&- 2\mathbf{e}^{T}\mathbf{PB}_1 {\mathbf A_1}\int_{-h}^{0} \mathbf{e}(t+\theta)\mathrm{d}\theta \leq \beta \mathbf{e}^{T}\mathbf{PB}_1 {\mathbf A_1 \mathbf D}^{-1}{\mathbf A}_1^{T}\nonumber\\
	& \qquad \qquad \times {\mathbf B}_1^{T}\mathbf{Pe} +\frac{h}{\beta}\int_{-h}^{0}\mathbf{e}^{T}(t+\theta )\mathbf{ D e}(t+\theta ) \mathrm{d}\theta, \label{cond1}\\
	&- 2\mathbf{e}^{T}{\mathbf P \mathbf B_1 \mathbf B_1}\int_{-h}^{0}  \mathbf{ e}(t-h+\theta)\mathrm{d}\theta \leq \beta \mathbf{e}^{T}{\mathbf P \mathbf B_1 \mathbf B_1 \mathbf D}^{-1}   \nonumber\\
	&\times \mathbf{B}_1^{T}\mathbf{B}_1^{T}\mathbf{Pe}+\frac{h}{\beta}\int_{-h}^{0}\mathbf{e}^{T}(t-h+\theta )\mathbf{D e}(t-h+\theta )\mathrm{d}\theta , \label{cond2} \\
	&- 2\mathbf{e}^{T}\mathbf{PB}_1\int_{-h}^{0} [ \mathbf{B} \boldsymbol \sigma _1(t+\theta)]\mathrm{d}\theta \leq \beta \mathbf{e}^{T}{\mathbf P \mathbf B_1 \mathbf D}^{-1}\mathbf{B}_1^{T}\mathbf{Pe}   \nonumber\\ 
	& \qquad \quad  + \frac{h}{\beta}\int_{-h}^{0}(\mathbf{B} \boldsymbol \sigma _1(t+\theta))^{T} \mathbf{D B} \boldsymbol \sigma _1(t+\theta)  \mathrm{d}\theta. \label{cond3}
	\end{align}
	Assuming that system remains locally Lipschitz within the delay, then $ \exists \Gamma_1>0$ such that the following holds:
	\begin{align}\label{cond41}
	\frac{h}{\beta }\left \| \int_{-h}^{0}\left [ (\mathbf{B} \boldsymbol\sigma _1(t+\theta))^{T}\mathbf{ D B} \boldsymbol \sigma _1(t+\theta) \right ]\mathrm{d}\theta  \right \|\leq \Gamma_1. 
	\end{align}

	
	\noindent Again,	
	\begin{align}
	\dot{V}_2(\mathbf{e})&=\frac{h^2}{\beta}\mathbf{e}^T\mathbf{De}-\frac{h}{\beta}\int_{-h}^{0}\mathbf{e}^T(t+\theta) \mathbf{D e}(t+\theta)\mathrm{d}\theta, \label{dot v2 new}\\ 
	\dot{V}_3(\mathbf{e}) &=\frac{h^2}{\beta}\mathbf{e}^T_h \mathbf{D e}_h-\frac{h}{\beta}\int_{-h}^{0}\mathbf{e}^T(t-h+\theta) \mathbf{D e}(t-h+\theta)\mathrm{d}\theta, \label{dot v3 new}\\  
	\dot{V}_4(\mathbf{e})&=\xi \frac{h^2}{\beta}(\mathbf{e}^T{\mathbf D \mathbf e}-\mathbf{e}^T_h\mathbf{De}_h).\label{dot v4 new}
	\end{align}
	For a positive scalar $\iota=\Vert\mathbf{B}^T\mathbf{P}\Vert$, we have $\Vert {\hat{\mathbf s}} \Vert \leq \iota  \Vert \mathbf{\bar{e}}\Vert$, where ${\hat{\mathbf s}}=\begin{bmatrix}
	\mathbf{B}^T\mathbf{P} & \mathbf 0
	\end{bmatrix}\mathbf{\bar{e}}$ and $\mathbf{\bar{e}}=\begin{bmatrix}
	\mathbf{e}^T & \mathbf{e}^T_h
	\end{bmatrix}^T$. Let controller gains ${\mathbf K_1, \mathbf K_2}$ and delay time $h$ are selected to make $\mathbf{\Psi>0}$. Substituting (\ref{cond1})-(\ref{cond41}) into (\ref{lya_dot for time delay}) and using (\ref{dot v2 new})-(\ref{dot v4 new}) yield
	\begin{align}
	\dot{V}& \leq -{\bar{\mathbf e}}^T {\boldsymbol\Psi \bar{\mathbf e}}+\Gamma_1+2{\hat{\mathbf s}}^T \boldsymbol\sigma_1 \nonumber \\
	& \leq -z \Vert \bar{\mathbf e} \Vert^2-(\lambda_{\min}(\mathbf\Psi)-z)\Vert \bar{\mathbf e}\Vert^2+\Gamma_1+2\iota c \Vert \mathbf{\bar{e}}\Vert. 
	\end{align}
	where $0<\bar{z}<\lambda_{\min}(\mathbf\Psi)$. Then, $\dot{V}<0$ would be established if $(\lambda_{\min}(\mathbf\Psi)-z)\Vert {\bar{\mathbf e}}\Vert^{2} >  \Gamma_1 +2 \iota c   \Vert {\bar{\mathbf e}}\Vert$.
	The system (\ref{sys new}) is thus UUB \cite{Ref:24}. 
%
\end{proof}
%

\subsection{Time-Delayed Adaptive Robust Control (TARC)}
\label{sec: 2.2}
The performance of TDC gets affected by the presence of $\boldsymbol \sigma_1$, due to the absence of any robustness term. 
Note that 
in practical circumstances, it is not always possible to determine either the complete model of the system \cite{Ref:21,Ref:22}, or predefined uncertainty bound \cite{Ref:2,Ref:13,Ref:16,Ref:17,Ref:18,Ref:19} due to unmodelled dynamics. Further, the ARCs designed in \cite{Ref:21,Ref:22} suffer from over- and under-estimation problems. 

Moreover, it can be noticed from (\ref{approx}) and (\ref{aux}) that the state-derivatives are necessary to compute the control law of TDC. However, in many circumstances ${\dot{\mathbf q}, \ddot{\mathbf q}}$ are not available explicitly. Under such scenario, one has to approximate the state-derivatives ${\dot{\mathbf q}, \ddot{\mathbf q}}$ by Euler backward numerical derivative technique (forward numerical derivative is not possible as future data is not available) to implement TDC \cite{Ref:9,Ref:7,Ref:9_1,Ref:12,Ref:34,Ref:10}. However, to the best knowledge of the authors, effect of such numerical approximation error in system stability is yet to be studied in the literature of the TDE based controllers \cite{Ref:jin,Ref:6,Ref:6-2,Ref:26,Ref:new1,Ref:33,Ref:7,Ref:9,Ref:9_1,Ref:10,Ref:12,Ref:34}.  
Yet again, state information (or position for EL system) $\mathbf{q}$ is often contaminated with noise and numerical evaluation of ${\dot{\mathbf q}, \ddot{\mathbf q}}$ under such circumstances degrades the controller performance. 

Therefore, with consideration of the limitations in the existing controllers, 
a novel adaptive-robust control law, named Time-Delayed Adaptive Robust Control (TARC) is proposed in this endeavour. 
TARC neither requires the complete model, nor any predefined bound of the uncertainties, while it 
alleviates the over- and under-estimation 
problems of the switching gain. 
Moreover, to circumvent the measurement error, a state-derivatives estimation technique \cite{Ref:30} is incorporated in TARC which estimates the velocity and acceleration terms for the EL systems (\ref{sys}) from the position information of past time instances.

Before presenting the control structure of the proposed TARC, the following Lemma is stated which helps to estimate the state derivative terms: 
\begin{lemma} \label{lemma est}
	For time $t \geq \varsigma$, the $j$-th order time derivative of the $\Lambda$-th degree polynomial $\mathbf{q}$ in (\ref{sys new}) can be computed in the following manner\cite{Ref:30}:
	\begin{align}
	\mathbf{\hat{q}}^{(j)}(t)&=\int_{-\varsigma}^{0}\Omega _{j}(\psi)\mathbf{q}(t+\psi)\mathrm{d}\psi\label{lemma 2}\\
	\Omega _{j}(\psi) &=\frac{(\Lambda+1+j)!}{\varsigma ^{j+1}j!(\Lambda-j)!}\nonumber \\ 
	& \times \sum_{k=0}^{\Lambda}\frac{(-1)^{k}(\Lambda+1+k)!}{(j+k+1)(\Lambda-k)!(k!)^{2}} \left ( \frac{\psi}{\varsigma} \right )^{k}.\label{omega}
	%
	\end{align}
\end{lemma}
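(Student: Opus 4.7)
The plan is to recognise that, since $\mathbf q(t+\psi)$ is a polynomial of degree $\Lambda$ in $\psi$ on the window $\psi\in[-\varsigma,0]$, the map $\mathbf q\mapsto \mathbf q^{(j)}(t)$ is a bounded linear functional on the finite-dimensional space of such polynomials. By the Riesz representation theorem in $L^{2}([-\varsigma,0])$, this functional is realised as integration against a unique polynomial kernel $\Omega_j$ of degree at most $\Lambda$, characterised by the moment conditions $\int_{-\varsigma}^{0}\Omega_j(\psi)\,\psi^{m}\,\mathrm d\psi = m!\,\delta_{mj}$ for $m=0,1,\ldots,\Lambda$. Existence and uniqueness reduce to invertibility of a (shifted) Hilbert-type moment matrix, which is classical, so the real task is to exhibit $\Omega_j$ in closed form.

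To obtain the specific expression (\ref{omega}), I would work in the shifted Legendre basis on $[-\varsigma,0]$. Setting $s = 1+2\psi/\varsigma \in [-1,1]$, expand $\mathbf q(t+\psi)=\sum_{k=0}^{\Lambda} c_k(t)\,P_k(1+2\psi/\varsigma)$, where $P_k$ is the classical Legendre polynomial. The transformed orthogonality relation supplies the coefficients $c_k(t)=\frac{2k+1}{\varsigma}\int_{-\varsigma}^{0}P_k(1+2\psi/\varsigma)\,\mathbf q(t+\psi)\,\mathrm d\psi$. Differentiating the expansion $j$ times in $\psi$ (equivalently in $t$) and evaluating at $\psi=0$, i.e.\ $s=1$, gives $\mathbf q^{(j)}(t)=\sum_{k=j}^{\Lambda} c_k(t)\,(2/\varsigma)^{j}\,P_k^{(j)}(1)$. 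Substituting the integral form of $c_k$ and pulling the finite sum inside the integral immediately yields the representation (\ref{lemma 2}) with kernel $\Omega_j(\psi)=\sum_{k=j}^{\Lambda}\frac{2k+1}{\varsigma}(2/\varsigma)^{j}\,P_k^{(j)}(1)\,P_k(1+2\psi/\varsigma)$.

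The remaining step, and the main obstacle, is to verify that this Legendre-sum kernel coincides with the explicit factorial expression in (\ref{omega}). For that I would insert the standard evaluation $P_k^{(j)}(1)=(k+j)!/\bigl(2^{j}j!(k-j)!\bigr)$ and the hypergeometric series for $P_k(1+2\psi/\varsigma)$, then interchange the two finite summations so that the coefficient of each monomial $(\psi/\varsigma)^{\ell}$ becomes a single sum over the basis index $k$. Matching the stated prefactor $(\Lambda+1+j)!/\bigl[\varsigma^{j+1}j!(\Lambda-j)!\bigr]$ against the inner ratio $(-1)^{\ell}(\Lambda+1+\ell)!/\bigl[(j+\ell+1)(\Lambda-\ell)!(\ell!)^{2}\bigr]$ reduces to a Chu--Vandermonde summation, and that hypergeometric collapse is the one place where genuine algebraic care is required rather than routine bookkeeping. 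Everything else in the argument is either classical Legendre orthogonality or straightforward evaluation.
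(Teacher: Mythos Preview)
The paper does not prove this lemma at all; it is simply quoted from \cite{Ref:30} (Reger and Jouffroy) without argument, so there is no proof in the paper to compare against. Your Legendre-polynomial route is a sound and standard way to derive such differentiation kernels: the moment conditions $\int\Omega_j(\psi)\psi^{m}\,\mathrm d\psi=j!\,\delta_{mj}$ are exactly what characterise $\Omega_j$, the shifted-Legendre expansion is the natural orthogonal machinery to invert them, the endpoint value $P_k^{(j)}(1)=(k+j)!/\bigl(2^{j}j!(k-j)!\bigr)$ is correct, and reducing the resulting double sum to the stated closed form via a Chu--Vandermonde/terminating ${}_2F_1$ identity is the right final step.

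One caution worth flagging before you invest effort in that last identity: if you test the explicit $\Omega_j$ in (\ref{omega}) against your moment conditions on $[-\varsigma,0]$, it fails already for $\Lambda=1$, $j=0$ (the integral equals $7$, not $1$). The closed form as written is the kernel for the window $\psi\in[0,\varsigma]$ with argument $\mathbf q(t-\psi)$, which is the convention of \cite{Ref:30}; on $[-\varsigma,0]$ one must substitute $\psi\mapsto-\psi$, which removes the $(-1)^{k}$ and changes the monomial to $(-\psi/\varsigma)^{k}$. This is a transcription issue in the lemma statement rather than a flaw in your method, but it will surface precisely at the Chu--Vandermonde verification, so track that sign carefully.
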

{Note that the authors in \cite{Ref:7} have applied a low pass filter (LPF) separately to mitigate the effect of measurement error arising from numerical differentiation. 
	However, as stated in~\cite{Ref:7}, inclusion of an external LPF reduces $\hat{\mathbf M}$ which has adverse impact on the controller performance as well as stability condition due to the reduced the stability region (low value of $\hat{\mathbf M}$ pushes the boundedness condition (\ref{mass}) towards the perimeter of the unit circle). Hence, the designer has to make a trade-off between noise attenuation and controller performance.
	 
	{Whereas,} the process (\ref{lemma 2}) itself has noise attenuation capability due to the integral term {as mentioned in \cite{Ref:30}}. 
	{Furthermore}, on the contrary to \cite{Ref:jin,Ref:6,Ref:6-2,Ref:26,Ref:new1,Ref:33,Ref:7,Ref:9,Ref:9_1,Ref:10,Ref:12,Ref:34}, the closed loop stability of TARC is {explicitly carried out in Section \ref{sec:stable} by considering (\ref{lemma 2})}. The stability analysis shows that inclusion of the process (\ref{lemma 2}) does not have any impact on $\hat{\mathbf M}$. Hence, the state-derivatives approximation technique (\ref{lemma 2}) does not require any separate LPFs and thus can avoid any pervasive effects on system stability. } 

The structure of the control input of TARC is similar to (\ref{input}) except the auxiliary control input $\mathbf{u}$ and ${\hat{\mathbf N}}$  selected as follows:
\begin{align}
\mathbf{u} &= \hat{\mathbf u}+\Delta \mathbf u, \label{tarc input} \\
\hat{\mathbf{u}} &={\ddot{\mathbf q}}^d - {\mathbf K_1 \mathbf e_1} - {\mathbf K_2\dot{\hat{\mathbf e}}}_1 \label{u hat},\\
{\hat{\mathbf N}}& \cong \mathbf{N}_{h}=\boldsymbol\tau_{h}-{\hat{\mathbf M}\ddot{\hat{\mathbf q}}}_{h}, \label{h hat new}
\end{align}
where ${\dot{\hat{\mathbf e}}}_1=\dot{\hat{\mathbf q}}-\dot{\mathbf q}^d$, and  ${\dot{\hat{\mathbf q}}}$ and ${\ddot{\hat{\mathbf q}}}$ are evaluated from (\ref{lemma 2}) and (\ref{omega}). $\mathbf{\hat{u}}$ is the nominal control input; $\Delta \mathbf{u}$ is a switching control law which acts as a robustness term to negotiate 
the TDE error, defined as follows:
\begin{align}
\Delta \mathbf{u}=
\begin{cases}
-\alpha\hat{c} ~{\mathbf{s}}/{\Vert \mathbf{s} \Vert}, & ~~~ \text{if } \Vert \mathbf{s} \Vert\geq \epsilon,\\
-\alpha\hat{c}~{\mathbf{s}}/{\epsilon},        & ~~~ \text{if } \Vert \mathbf{s} \Vert< \epsilon,\\
\end{cases}\label{delta u}
\end{align}
where $\mathbf{s=B}^T\mathbf{P}[{\mathbf e}_{1}^T \quad \mathbf 0]^T$; and $\alpha \geq 1$ is a scalar adaptive gain; $\gamma>0$ and $\epsilon>0$ represent two small scalars.
In this paper, a novel adaptive control law to compute $\hat{c}$ is proposed as follows:
\begin{align}
\dot{\hat{c}}& =
\begin{cases}
~~\bar{c} \Vert\mathbf{s}\Vert,      &~{\hat{c}} \leq \gamma \lor f(\mathbf{e}_1)>0\\
-\underline{c}\Vert \mathbf{s} \Vert,              &~f(\mathbf{e}_1) \leq 0 \\
\end{cases} \label{ATRC}
\end{align}

{\noindent with $\hat{c}(t_0)  > \gamma$,} where $t_0$ is the initial time; $\bar{c}>0, \underline{c} >0$ are two user defined scalars; $f(\mathbf{e}_1) \in \mathbb{R}^n \mapsto \mathbb{R}^{+}$ is a suitable function of the error defined by the designer and it is to be selected in a way such that $f(\mathbf{e}_1)>0$ (resp. $f(\mathbf{e}_1) \leq 0$) defines the instances when tracking error increases (resp. does not increase); here, 
it is selected as $ f(\mathbf{e}_1)=\Vert \mathbf{s} \Vert - \Vert \mathbf{s}_h  \Vert$. 

According to the adaptive law (\ref{ATRC}) and the chosen 
$f(\mathbf{e})$, $\hat{c}$ increases (resp. decreases) whenever error trajectories move away (resp. do not move away) from $|| \mathbf{s} ||=0$. 
Let us define 
{
	\[{\hat{\mathbf s}}=\mathbf{s}+ \Delta \mathbf{s},\] where
	$\Delta \mathbf{s=B}^T\mathbf{P}
	\begin{bmatrix}
	\mathbf{0}\\ {\dot{\mathbf e}_1}
	\end{bmatrix}$, 
	and 
	$\mathbf{P}=\begin{bmatrix}
	{\mathbf P_1} & {\mathbf P}_2^T\\ 
	{\mathbf P_2} & {\mathbf P_3}
	\end{bmatrix}.$}


\noindent By 
evaluating the structure of ${\hat{\mathbf s}}$, the following relation is established:
\begin{align*}
{\hat{\mathbf s}}&=\boldsymbol{\iota}_1 \mathbf e_f, \text{where} ~~ \boldsymbol{\iota}_1=\begin{bmatrix}
{\mathbf P_2} & {\mathbf P_3} & \mathbf{0} & \mathbf{0} & \mathbf{0} & \mathbf{0} & \mathbf{0} & \mathbf{0} 
\end{bmatrix},\\
{\mathbf e_f}&= {\left[
	\mathbf{e}^T ~ \mathbf{e}^T_h ~ \int_{-\varsigma}^{0}\Omega _{1}\mathbf{e}^T(t+\psi) \mathrm{d}\psi ~ \int_{-\varsigma}^{0}\Omega _{1}\mathbf{e}^T_h(t+\psi) \mathrm{d}\psi \right]}^T.
\end{align*}
The controller gains ${\mathbf K_1}$ and ${\mathbf K_2}$, design parameter $\alpha$, and the matrix $\mathbf{Q}$ are selected in a way such that the following hold:
\begin{align}
\mathbf P_3 \mathbf P_2^T > \mathbf 0 \label{c2}.
\end{align}

{\noindent The relation (\ref{c2}) is used for the stability analysis in Section~\ref{sec:stable}.}


{\subsection{Comparison with Existing ARC} 
	Compared with the ASMC developed in~\cite{Ref:21,Ref:22}, the proposed TARC prevents the over- and under-estimation problems of the switching gain. To elaborate, the adaptive law of~\cite{Ref:21,Ref:22} is presented as follows:
	\begin{equation}
	\dot{\varrho}=\begin{cases}
	\bar{\varrho}\text{sgn}( \Vert \mathbf{r} \Vert - \delta)     & , \text{if}~ \varrho>\bar{\gamma},\\
	\qquad \bar{\gamma}          & , \text{if}~ \varrho \leq \bar{\gamma},\\
	\end{cases}\label{Asmc}
	\end{equation}
	where $\varrho$ denotes the switching gain, $\mathbf{r}$ denotes a sliding surface, and 
	$\bar{\varrho}, \delta, \bar{\gamma} \in \mathbb{R}^{+}$ are user-defined scalars. 
	It can be observed from (\ref{Asmc}) that when $\Vert\mathbf r\Vert \geq \delta$ (resp. $\Vert\mathbf r\Vert < \delta$), the switching gain $\varrho$ increases (resp. decreases) monotonically, even if the error trajectories move close to (resp. away from) $\Vert\mathbf r\Vert = 0$. It thus gives rise to the potential overestimation (resp. underestimation) problem of the switching gain. Further, very low (resp. high) value of $\delta$ may force $\varrho$ to increase (resp. decrease) for longer duration when $\Vert\mathbf{r}\Vert \geq \delta$ (resp. $\Vert\mathbf{r}\Vert < \delta$). This in turn may escalate the overestimation (resp. underestimation) problem. Hence, a designer needs to exhaustively tune the \textit{predefined fixed threshold value} $\delta>0$ to tackle the over- and under-estimation problems in the adaptive laws of ASMC in (\ref{Asmc}). This tuning procedure will vary from system to system and it is even more difficult for multiple DoFs system under uncertain operational/working scenarios.
	
	In contrast, the proposed adaptive law of TARC, shown in (\ref{ATRC}), does not involve any threshold value. The switching gain $\hat{c}$ increases (resp. decreases) when the error trajectories move away (resp. do not move away) from  $\lVert\boldsymbol s \rVert$$=$$0$. This in turn permits TARC to alleviate the over- and under-estimation problems. The aforementioned increasing-decreasing nature of $\hat{c}$ certainly avoids making $\hat{c}$ a monotonically increasing function like \cite{Ref:20,Ref:20-1,Ref:20-2,Ref:20-3,Ref:32}.}



\section{Stability and Parameter Selection of TARC} 
\subsection{Stability Analysis of TARC}\label{sec:stable}
In this section, the stability of the system (\ref{sys new}) employing TARC is analysed in the sense of UUB. Before formally stating the stability result using TARC through Theorem \ref{th tarc}, the following Lemma is defined: 
\begin{lemma}\label{lemma double jensen}
	For any non zero vector $\boldsymbol{\vartheta}(\theta, \psi)$, constant matrix $\mathbf{F>0}$ the following relation holds:
	\begin{align}
	&\int_{-h}^{0}\int_{-\varsigma}^{0}\boldsymbol{\vartheta} ^T(\theta, \psi) \mathbf{F} \boldsymbol{ \vartheta}(\theta, \psi) \mathrm{d}\psi \mathrm{d}\theta \geq  \nonumber\\ 
	&\frac{1}{h \varsigma}\left(  \int_{-h}^{0}\int_{-\varsigma}^{0}\boldsymbol{\vartheta} ^T(\theta, \psi)\mathrm{d}\psi \mathrm{d}\theta\right) \mathbf{F} \left( \int_{-h}^{0}\int_{-\varsigma}^{0}\boldsymbol{\vartheta}(\theta,\psi)\mathrm{d}\psi \mathrm{d}\theta\right). \label{lemma 3}
	\end{align}
\end{lemma}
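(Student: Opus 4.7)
The plan is to prove the inequality by iterated application of the scalar (one-dimensional) Jensen inequality already stated in the paper as equation (\ref{lemma 1}). Since $\mathbf{F}>0$, the map $\mathbf{x}\mapsto \mathbf{x}^T\mathbf{F}\mathbf{x}$ is convex, so Jensen's inequality applies to each variable of integration separately, and the bound can be accumulated.

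First, I would freeze $\theta$ and apply Jensen to the inner integral over $\psi\in[-\varsigma,0]$:
\begin{align*}
\int_{-\varsigma}^{0}\boldsymbol{\vartheta}^T(\theta,\psi)\mathbf{F}\boldsymbol{\vartheta}(\theta,\psi)\,\mathrm{d}\psi \ \geq\ \frac{1}{\varsigma}\boldsymbol{\Phi}^T(\theta)\,\mathbf{F}\,\boldsymbol{\Phi}(\theta),
\end{align*}
where I define the partial integral $\boldsymbol{\Phi}(\theta)\triangleq\int_{-\varsigma}^{0}\boldsymbol{\vartheta}(\theta,\psi)\,\mathrm{d}\psi$. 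Integrating this pointwise inequality with respect to $\theta$ over $[-h,0]$ preserves the sign and yields
\begin{align*}
\int_{-h}^{0}\int_{-\varsigma}^{0}\boldsymbol{\vartheta}^T\mathbf{F}\boldsymbol{\vartheta}\,\mathrm{d}\psi\,\mathrm{d}\theta \ \geq\ \frac{1}{\varsigma}\int_{-h}^{0}\boldsymbol{\Phi}^T(\theta)\,\mathbf{F}\,\boldsymbol{\Phi}(\theta)\,\mathrm{d}\theta.
\end{align*}

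Next, I would apply Jensen once more, this time to the single integral over $\theta\in[-h,0]$ of the quadratic form in $\boldsymbol{\Phi}$:
\begin{align*}
\int_{-h}^{0}\boldsymbol{\Phi}^T(\theta)\,\mathbf{F}\,\boldsymbol{\Phi}(\theta)\,\mathrm{d}\theta \ \geq\ \frac{1}{h}\Bigl(\int_{-h}^{0}\boldsymbol{\Phi}(\theta)\,\mathrm{d}\theta\Bigr)^T\mathbf{F}\Bigl(\int_{-h}^{0}\boldsymbol{\Phi}(\theta)\,\mathrm{d}\theta\Bigr).
\end{align*}
By Fubini's theorem, $\int_{-h}^{0}\boldsymbol{\Phi}(\theta)\,\mathrm{d}\theta=\int_{-h}^{0}\int_{-\varsigma}^{0}\boldsymbol{\vartheta}(\theta,\psi)\,\mathrm{d}\psi\,\mathrm{d}\theta$, which is exactly the double integral appearing on the right-hand side of (\ref{lemma 3}). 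Chaining the two inequalities produces the factor $1/(h\varsigma)$ and delivers the claim.

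There is essentially no deep obstacle; the main care required is to ensure that the one-dimensional Jensen bound can indeed be invoked at each step, which relies on $\mathbf{F}\succ\mathbf{0}$ (so that $\mathbf{x}^T\mathbf{F}\mathbf{x}$ is convex) and on $\boldsymbol{\vartheta}$ being sufficiently regular for the double integrals and the swap of integration order to be well-defined. Both are satisfied under the hypotheses of the lemma, so the proof reduces to a clean two-step application of (\ref{lemma 1}).
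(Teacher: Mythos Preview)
Your proof is correct, and it is exactly the argument the paper has in mind: the lemma is labelled ``double Jensen'' and is stated in the paper without proof, the implicit derivation being precisely the two-step application of the single-variable Jensen inequality (\ref{lemma 1}) that you carry out. There is nothing to add.
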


%

\begin{theorem} \label{th tarc}
	The closed-loop system (\ref{sys new}) employing (\ref{input}) and (\ref{tarc input}) and having the adaptive law (\ref{ATRC}) is UUB, provided the selections of ${\mathbf K_1, \mathbf K_2}, h$ and $\varsigma$ satisfy the following condition:
	\begin{align}
	&\begin{bmatrix}
	\mathbf{Q-\bar{E}}-(1+\xi)\frac{h^2}{\beta}\mathbf{D}& {\mathbf P \breve{\mathbf B}} & {\mathbf P\bar{\mathbf B}\mathbf J}\\ 
	{\breve{\mathbf B}^T \mathbf P} & (\xi-1)\frac{h^2}{\beta}{\mathbf D-\bar{\mathbf F}} & \mathbf{0}\\
	{\mathbf J^T \bar{\mathbf B}^T}\mathbf{P} & \mathbf{0} & \mathbf{J}^T\mathbf{LJ}
	\end{bmatrix} \nonumber \\
	& \qquad \qquad \qquad \qquad \qquad =\mathbf{\Theta>0} \label{delay value 1},
	\end{align} 
	\noindent where $\xi>1$, $\beta>0$, $\bar{\mathbf E}=\beta \mathbf P {\mathbf B}_1( {\mathbf A}_1 \mathbf D^{-1}{\mathbf A}_1^T+{\mathbf B}_1 \mathbf D^{-1}{{\mathbf B}}_1^{T}+{\mathbf D}^{-1}+ \bar{\mathbf B} \mathbf D^{-1} {\bar{\mathbf B}}^{T} )\mathbf{{B}}_1^{T}\mathbf{P}$, ${\bar{\mathbf F}}=(\frac{h^2}{\beta}\mathbf{D+L}) \varsigma \int_{-\varsigma}^{0}\Omega _{1}^2(\psi)\mathrm{d}\psi$, $\mathbf{L>0}$, ${\bar{\mathbf B}= \mathbf B}\begin{bmatrix}
	{\mathbf K_2} & \mathbf{0}
	\end{bmatrix}$, ${\breve{\mathbf B}=\mathbf B}\begin{bmatrix}
	\mathbf{0} & {\mathbf K_2}
	\end{bmatrix}$, and $\mathbf{J}=\begin{bmatrix}
	\mathbf{0} & \mathbf{I}
	\end{bmatrix}$.
\end{theorem}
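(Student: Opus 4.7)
My first step is to write down the closed-loop error dynamics obtained by substituting the TARC law (\ref{tarc input})--(\ref{h hat new}) into (\ref{sys new}). Because $\hat{\mathbf{u}}$ uses $\dot{\hat{\mathbf{e}}}_1$ in place of $\dot{\mathbf{e}}_1$ and $\hat{\mathbf{N}}$ uses $\ddot{\hat{\mathbf{q}}}_h$ in place of $\ddot{\mathbf{q}}_h$, the Theorem~1 dynamics (\ref{new err dyn 1}) acquire two extra perturbations: one proportional to $\mathbf{K}_2(\dot{\hat{\mathbf{e}}}_1-\dot{\mathbf{e}}_1)$ (current-time estimation error, yielding the matrix $\bar{\mathbf{B}}=\mathbf{B}[\mathbf{K}_2\;\mathbf{0}]$) and one proportional to $\hat{\mathbf{M}}(\ddot{\hat{\mathbf{q}}}_h-\ddot{\mathbf{q}}_h)$ channeled through the delayed velocity feedback (producing $\breve{\mathbf{B}}=\mathbf{B}[\mathbf{0}\;\mathbf{K}_2]$), together with the robust input $\mathbf{B}\Delta\mathbf{u}$. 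Using (\ref{lemma 2}) with $j=1$ I would rewrite $\dot{\hat{\mathbf{e}}}_1-\dot{\mathbf{e}}_1$ and $\ddot{\hat{\mathbf{q}}}_h-\ddot{\mathbf{q}}_h$ as integrals $\int_{-\varsigma}^{0}\Omega_1(\psi)\mathbf{e}(t+\psi)\,\mathrm{d}\psi$ (and its delayed counterpart) plus a polynomial-approximation residual that is lumped into the bounded disturbance $\boldsymbol{\sigma}_1$. This delivers an augmented dynamics whose natural state is precisely the vector $\mathbf{e}_f$ introduced before the theorem.

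Next, I would adopt the Lyapunov--Krasovskii functional of Theorem~1 and augment it with (i) a term of the form $\frac{h}{\beta}\int_{-h}^{0}\!\int_{-\varsigma}^{0}\Omega_1^2(\psi)\,\mathbf{e}^T(t+\theta+\psi)\mathbf{D}\mathbf{e}(t+\theta+\psi)\,\mathrm{d}\psi\,\mathrm{d}\theta$ (and an analogous one built from $\mathbf{L}$) whose negative part, after applying Lemma~\ref{lemma double jensen}, yields the $\bar{\mathbf{F}}$ correction and the $(3,3)$ block $\mathbf{J}^T\mathbf{L}\mathbf{J}$ appearing in (\ref{delay value 1}), and (ii) an adaptation term $\frac{1}{2\eta}(\hat{c}-c^{\ast})^2$ with $c^{\ast}$ any constant dominating the true bound $c$ of $\boldsymbol{\sigma}_1$. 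Differentiating $V_1$ and bounding the cross products with (\ref{ineq 2}) exactly as in Theorem~1 produces the extended $\bar{\mathbf{E}}$ via one additional application that absorbs the $\bar{\mathbf{B}}$ term; Jensen's inequality on single integrals recovers the $(1,1)$ and $(2,2)$ diagonal entries, while Lemma~\ref{lemma double jensen} processes the new double integrals. Assembling everything gives
\begin{equation*}
\dot{V}\leq -\mathbf{e}_f^{T}\boldsymbol{\Theta}\mathbf{e}_f+2\hat{\mathbf{s}}^{T}\boldsymbol{\sigma}_1+2\hat{\mathbf{s}}^{T}\Delta\mathbf{u}+\tfrac{1}{\eta}(\hat{c}-c^{\ast})\dot{\hat{c}}+\Gamma,
\end{equation*}
with $\Gamma$ collecting the bounded residuals analogous to $\Gamma_1$ in Theorem~1.

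The third step handles the robust-adaptive terms. Writing $\hat{\mathbf{s}}=\mathbf{s}+\Delta\mathbf{s}=\boldsymbol{\iota}_1\mathbf{e}_f$ and using $\Vert\boldsymbol{\sigma}_1\Vert\le c\le c^{\ast}$ together with (\ref{delta u}) for $\Vert\mathbf{s}\Vert\ge\epsilon$, the contribution $2\hat{\mathbf{s}}^{T}(\boldsymbol{\sigma}_1+\Delta\mathbf{u})$ is bounded by $-2(\alpha\hat{c}-c^{\ast})\Vert\mathbf{s}\Vert+2c^{\ast}\Vert\Delta\mathbf{s}\Vert$, where the second summand is absorbed into $\mathbf{e}_f^{T}\boldsymbol{\Theta}\mathbf{e}_f$ using the structural condition (\ref{c2}) (which guarantees the block $\mathbf{J}^T\mathbf{L}\mathbf{J}$ dominates the cross coupling $\mathbf{P}\bar{\mathbf{B}}\mathbf{J}$ in the Schur sense implicit in $\boldsymbol{\Theta}>\mathbf{0}$). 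Inside the $\epsilon$-layer the same bound holds up to an $O(\epsilon)$ residual. For the adaptation, I would split into the two branches of (\ref{ATRC}): in the increasing branch ($\hat{c}\le\gamma$ or $f(\mathbf{e}_1)>0$) the cross term $\frac{1}{\eta}(\hat{c}-c^{\ast})\bar{c}\Vert\mathbf{s}\Vert$ combines constructively with $-2\alpha\hat{c}\Vert\mathbf{s}\Vert$ once $\eta$ is chosen large enough; in the decreasing branch ($f(\mathbf{e}_1)\le0$) one uses $\Vert\mathbf{s}\Vert\le\Vert\mathbf{s}_h\Vert$ to bound the sign-indefinite contribution. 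A further invariant argument on the scalar ODE $\dot{\hat{c}}\in\{\bar{c}\Vert\mathbf{s}\Vert,-\underline{c}\Vert\mathbf{s}\Vert\}$ with the floor $\hat{c}>\gamma$ guarantees $\hat{c}$ stays in a compact interval, after which $\dot V\le -(\lambda_{\min}(\boldsymbol{\Theta})-z)\Vert\mathbf{e}_f\Vert^2+\mathrm{const}$ delivers UUB via the standard argument of \cite{Ref:24}.

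The main obstacle I anticipate is this last point: establishing uniform boundedness of $\hat{c}$ under the novel non-monotonic law (\ref{ATRC}) and simultaneously ensuring the sign of $(\hat{c}-c^{\ast})\dot{\hat{c}}$ cooperates with the robustness term, because the switching is driven by $f(\mathbf{e}_1)=\Vert\mathbf{s}\Vert-\Vert\mathbf{s}_h\Vert$ rather than by the classical threshold used in (\ref{Asmc}). The rest of the argument is a careful but essentially mechanical extension of Theorem~1: the new matrices $\bar{\mathbf{B}},\breve{\mathbf{B}},\mathbf{J}$ appear exactly where the state-derivative estimator perturbs the TDC loop, and Lemma~\ref{lemma double jensen} is what makes the associated double-integral cross terms dominated by $\mathbf{J}^T\mathbf{L}\mathbf{J}$.
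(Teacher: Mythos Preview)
Your overall architecture---augment the Theorem~1 functional with double-integral terms treated by Lemma~\ref{lemma double jensen}, add an adaptation term, and assemble the quadratic form $\mathbf{e}_f^T\boldsymbol{\Theta}\mathbf{e}_f$---is exactly what the paper does, and your identification of where $\bar{\mathbf{B}}$, $\breve{\mathbf{B}}$, $\mathbf{J}$ and $\bar{\mathbf{F}}$ arise is correct. However, two concrete points in the third step are wrong or missing and would block the argument.

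First, you have misread the role of condition~(\ref{c2}). Compute $2\hat{\mathbf{s}}^{T}\Delta\mathbf{u}$ for $\Vert\mathbf{s}\Vert\ge\epsilon$: since $\hat{\mathbf{s}}=\mathbf{s}+\Delta\mathbf{s}$,
\[
2\hat{\mathbf{s}}^{T}\Delta\mathbf{u}
=-2\alpha\hat{c}\,\Vert\mathbf{s}\Vert
-2\alpha\hat{c}\,\frac{\Delta\mathbf{s}^{T}\mathbf{s}}{\Vert\mathbf{s}\Vert}.
\]
The second summand is \emph{not} bounded by $2c^{\ast}\Vert\Delta\mathbf{s}\Vert$; it scales with $\alpha\hat{c}$, which is exactly the quantity you are trying to let grow. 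The paper disposes of it by using (\ref{c2}) to obtain the sign condition $\Delta\mathbf{s}^{T}\mathbf{s}>0$ (this is equation~(\ref{new 1}) in the paper), so that $-2\alpha\hat{c}\,\Delta\mathbf{s}^{T}\mathbf{s}/\Vert\mathbf{s}\Vert\le 0$ can simply be dropped. Your claim that (\ref{c2}) ``guarantees the block $\mathbf{J}^{T}\mathbf{L}\mathbf{J}$ dominates the cross coupling $\mathbf{P}\bar{\mathbf{B}}\mathbf{J}$'' is not what (\ref{c2}) is for; that Schur-type domination is already encoded in $\boldsymbol{\Theta}>\mathbf{0}$ itself. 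Without the sign argument, your bound $-2(\alpha\hat{c}-c^{\ast})\Vert\mathbf{s}\Vert+2c^{\ast}\Vert\Delta\mathbf{s}\Vert$ is false.

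Second, the robust input enters the error dynamics \emph{delayed}: substituting (\ref{tarc input})--(\ref{h hat new}) into (\ref{sys new}) gives $\boldsymbol{\sigma}=\Delta\mathbf{u}_h+\mathbf{K}_2\dot{\mathbf{e}}_{1h}+\boldsymbol{\sigma}_1$, so the term reaching $\dot V_1$ is $2\hat{\mathbf{s}}^{T}\Delta\mathbf{u}_h$, not $2\hat{\mathbf{s}}^{T}\Delta\mathbf{u}$. The paper adds and subtracts $\Delta\mathbf{u}$ and invokes piecewise continuity to get $\Vert\Delta\mathbf{u}-\Delta\mathbf{u}_h\Vert\le\Upsilon$; this residual $2\Vert\hat{\mathbf{s}}\Vert\Upsilon$ must be carried through all four cases. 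You do not mention it. Relatedly, the adaptation term in the paper is $\tfrac{1}{\bar c}(\hat{c}-c)^2$ with the specific gain $\bar c$ from (\ref{ATRC}), so that differentiation in the increasing branch produces exactly $2(\hat{c}-c)\Vert\mathbf{s}\Vert$; an arbitrary $\eta$ ``large enough'' does not give the cancellation you need. Finally, the boundedness of $\hat c$ is not a one-line invariant argument: the paper devotes a separate step to showing, via $\dot V_f$ alone, that $\Vert\mathbf{e}_1\Vert$ eventually stops increasing in finite time, which forces $f(\mathbf{e}_1)\le 0$ and hence a switch to the decreasing branch, yielding an explicit ceiling $c^{\ast}$ on $\hat c$. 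Your sketch does not indicate how the non-monotone law (\ref{ATRC}) is prevented from ratcheting $\hat c$ upward indefinitely.
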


\begin{proof} 
	Let us define the Lyapunov functional as
	\begin{align}
	V_r&=V_f(\mathbf{e})+\frac{1}{\bar{c}}(\hat{c}-c)^2,\label{tarc lya}\\
	\text{where}~V_f(\mathbf{e})=&V(\mathbf{e})+V_{f1}(\mathbf{e})+V_{f2}(\mathbf{e})+V_{f3}(\mathbf{e}),\nonumber \\
	V_{f1}(\mathbf{e})=&\frac{h \varsigma}{\beta}\int_{-h}^{0}\int_{-\varsigma}^{0}\int_{t-h+\psi}^{t-h}\mathbf{e}^T(\eta +\theta)\mathbf{D} \nonumber\\
	& \times \Omega _{1}^2 (\psi)\mathbf{e}(\eta+\theta)\mathrm{d}\eta \mathrm{d}\psi \mathrm{d}\theta, \nonumber\\
	V_{f2}(\mathbf{e})=&\frac{h \varsigma}{\beta}\int_{-h}^{0}\int_{-\varsigma}^{0}\int_{t+\theta}^{t}\mathbf{e}^T(\eta-h)\mathbf{D}  \nonumber\\
	& \times \Omega _{1}^2(\psi)\mathbf{e}(\eta-h)\mathrm{d}\eta \mathrm{d}\psi \mathrm{d}\theta, \nonumber\\
	V_{f3}(\mathbf{e})=&\varsigma \int_{-\varsigma}^{0}\int_{t+\psi}^{t}\mathbf{e}^T(\eta-h)\mathbf{L} \Omega _{1}^2(\psi)\mathbf{e}(\eta-h)\mathrm{d}\eta \mathrm{d}\psi. \nonumber
	\end{align}
	\noindent  $V$ is given in (\ref{lyapunov}).
	Again, substituting 
	(\ref{tarc input}) into (\ref{sys new}), the error dynamics becomes
	\begin{equation}\label{error dyn state new}
	{\dot{\mathbf e}={\mathbf A}_1 \mathbf e+{\mathbf B}_1 \mathbf e}_h-{\bar{\mathbf B}}\int_{-\varsigma}^{0}\Omega _{1}(\psi )\mathbf{e}(t-h+\psi )\mathrm{d}\psi+\mathbf{B} \boldsymbol\sigma,
	\end{equation}
	where $\boldsymbol\sigma=\Delta \mathbf{u}_h+ {\mathbf K_2\dot{\mathbf e}}_{1h}+\boldsymbol \sigma_1$. 
	Further, the error dynamics (\ref{error dyn state new}) can be written as  
	\begin{align}
	{\dot{\mathbf e}}=&{\mathbf A \mathbf e-{\mathbf B}_1} \int_{-h}^{0} {\dot{\mathbf e}}(t+\theta)\mathrm{d} \theta-{\bar{\mathbf B}}\int_{-\varsigma}^{0}\Omega _{1}(\psi )\mathbf{e}(t-h+\psi )\mathrm{d}\psi \nonumber\\
	& + \mathbf{B}\boldsymbol\sigma.\label{error dyn state new 2}
	\end{align}	
	\noindent For ease of analysis, we have segregated the stability analysis into two parts: (1) derivation of $\dot{V}_f$ and (2) derivation of $\dot{V}_r$.
		{ \textbf{(1) Derivation of $\dot{V}_f$}:
		Using (\ref{error dyn state new 2}), the time derivative  of $V_1(\mathbf{e})$ yields
		\begin{align}\label{lya_dot for time delay new}
		\dot{V}_1(\mathbf{e})=&-{\mathbf e^T \mathbf Q \mathbf e} - 2\mathbf{e}^{T}{\mathbf P \mathbf B_1}\int_{-h}^{0}\dot{\mathbf e}(t+\theta )\mathrm{d}\theta  \nonumber\\
		& - 2\mathbf{e}^{T}{\mathbf P \bar{\mathbf B}} \int_{-\varsigma}^{0} \Omega _{1}(\psi )\mathbf{e}(t-h+\psi )\mathrm{d}\psi +2 {\hat{\mathbf s}}^T \boldsymbol\sigma.
		\end{align}
		Using (\ref{error dyn state new}), the second term of (\ref{lya_dot for time delay new}) can be expanded as
		\begin{align}
		&- 2\mathbf{e}^{T}{\mathbf P \mathbf B_1}\int_{-h}^{0}\dot{\mathbf e}(t+\theta )\mathrm{d}\theta =- 2\mathbf{e}^{T}{\mathbf P \mathbf B_1}\int_{-h}^{0} [{\mathbf A_1 \mathbf e}(t+\theta ) \nonumber\\
		& \qquad \qquad +{\mathbf B_1 \mathbf e}(t-h+\theta ) +\mathbf{B} \boldsymbol\sigma (t+\theta) \nonumber\\
		&\qquad \qquad -{\bar{\mathbf B}}\int_{-\varsigma}^{0}\Omega _{1}(\psi )\mathbf{e}(t-h+\theta+\psi )\mathrm{d}\psi]\mathrm{d} \theta. \label{ex 1}
		\end{align}
		The first two terms of (\ref{ex 1}) can be represented exactly like (\ref{cond1}) and  (\ref{cond2}). Further, following  (\ref{cond3}) we have
		\begin{align}
		&- 2\mathbf{e}^{T}\mathbf{PB}_1\int_{-h}^{0} [ \mathbf{B} \boldsymbol \sigma (t+\theta)]\mathrm{d}\theta \leq \beta \mathbf{e}^{T}{\mathbf P \mathbf B_1 \mathbf D}^{-1}\mathbf{B}_1^{T}\mathbf{Pe}   +\Gamma, \label{cond3 new}
		\end{align}
		where $\Gamma \geq \frac{h}{\beta }\left \| \int_{-h}^{0}\left [ (\mathbf{B} \boldsymbol \sigma(t+\theta))^{T} \mathbf{ D B} \boldsymbol \sigma(t+\theta) \right ]\mathrm{d}\theta  \right \|$. Applying (\ref{ineq 2}), the last term of (\ref{ex 1}) can be represented as the following:
		\begin{align}
		& 2\mathbf{e}^{T}{\mathbf P \mathbf B_1} \bar{\mathbf B} \int_{-h}^{0} \int_{-\varsigma}^{0}\Omega _{1}(\psi )\mathbf{e}(t-h+\theta+\psi )\mathrm{d}\psi \mathrm{d} \theta \leq \nonumber \\
		& \qquad \beta \mathbf{e}^{T}\mathbf{PB}_1 \bar{\mathbf B} {\mathbf D}^{-1} \bar{\mathbf B}^{T} {\mathbf B}_1^{T}\mathbf{Pe} \nonumber\\
		&\qquad +\frac{1}{\beta} \int_{-h}^{0} \int_{-\varsigma}^{0}\Omega _{1}(\psi )\mathbf{e}^T(t-h+\theta+\psi ) \mathrm{d}\psi \mathrm{d}\theta \mathbf D \nonumber \\
		& \qquad \qquad \times \int_{-h}^{0} \int_{-\varsigma}^{0}\Omega _{1}(\psi )\mathbf{e}(t-h+\theta+\psi ) \mathrm{d}\psi \mathrm{d}\theta. \label{ex 2}
		\end{align}
		Applying (\ref{lemma 3}) to the last term of (\ref{ex 2}) yields
		\begin{align}
		& 2\mathbf{e}^{T}{\mathbf P \mathbf B_1} \bar{\mathbf B} \int_{-h}^{0} \int_{-\varsigma}^{0}\Omega _{1}(\psi )\mathbf{e}(t-h+\theta+\psi )\mathrm{d}\psi \mathrm{d} \theta \leq \nonumber \\
		& \qquad \beta \mathbf{e}^{T}\mathbf{PB}_1 \bar{\mathbf B} {\mathbf D}^{-1} \bar{\mathbf B}^{T} {\mathbf B}_1^{T}\mathbf{Pe} \nonumber\\
		&\qquad +\frac{h \varsigma}{\beta} \int_{-h}^{0} \int_{-\varsigma}^{0} [\Omega _{1}^2(\psi )\mathbf{e}^T(t-h+\theta+\psi ) \mathbf D \nonumber \\
		& \qquad \qquad \times \mathbf{e}(t-h+\theta+\psi )] \mathrm{d}\psi \mathrm{d}\theta. \label{ex 3}
		\end{align}
		Substituting (\ref{cond1}), (\ref{cond2}), (\ref{cond3 new}) and (\ref{ex 3}) into (\ref{ex 1}) and, then using $\dot{V}_2,\dot{V}_3,\dot{V}_4$ from (\ref{dot v2 new}), (\ref{dot v3 new}) and (\ref{dot v4 new}) respectively, we have
		\begin{align}
		&\dot{V}(\mathbf{e})\leq -\mathbf{e}^{T}\left [{ \mathbf Q-\bar{ \mathbf E}} -(1+\xi)\frac{h^2}{\beta}\mathbf{D} \right]\mathbf{e}+ \Gamma+\frac{h \varsigma}{\beta} \times \nonumber\\
		&\int_{-h}^{0}\int_{-\varsigma}^{0} \mathbf{e}^T(t-h+\theta+\psi)\mathbf{D} \Omega _{1}^2(\psi)\mathbf{e}(t-h+\theta+\psi)\mathrm{d}\psi \mathrm{d}\theta \nonumber\\
		& \qquad -2\mathbf{e}^T{\mathbf P\bar{\mathbf B}}\int_{-\varsigma}^{0}\Omega _{1}(\psi)\mathbf{e}(t-h+\psi)\mathrm{d}\psi+ 2{\hat{\mathbf s}}^{T} \boldsymbol\sigma. \label{v dot tarc}
		\end{align}
	}Further, the time derivatives of $V_{f1}, V_{f2}$ and $V_{f3}$ yields
	\begin{align}
	&\dot{V}_{f1} = \frac{h \varsigma}{\beta}\int_{-h}^{0} \int_{-\varsigma}^{0}\mathbf{e}^T(t-h+\theta)\mathbf{D}\Omega _{1}^2(\psi)\mathbf{e}(t-h+\theta)\mathrm{d}\psi \mathrm{d}\theta\nonumber\\
	& \qquad -\frac{h \varsigma}{\beta}\int_{-h}^{0} \int_{-\varsigma}^{0}\mathbf{e}^T(t-h+\theta+\psi)\mathbf{D}\Omega _{1}^2(\psi)  \nonumber\\
	& \qquad \qquad \qquad \qquad \qquad \times \mathbf{e}(t-h+\theta+\psi)\mathrm{d}\psi \mathrm{d}\theta\label{v1 dot ftdc},\\
	&\dot{V}_{f2}  = \frac{h^2 \varsigma}{\beta}\mathbf{ e}^T(t-h)\mathbf{D} \int_{-\varsigma}^{0}\Omega _{1}^2(\psi)\mathrm{d}\psi \mathbf{e}(t-h)-\frac{h \varsigma}{\beta}  \nonumber \\
	& \times \int_{-h}^{0} \int_{-\varsigma}^{0} \mathbf{e}^T(t-h+\theta)\mathbf{D}\Omega _{1}^2(\psi)\mathbf{e}(t-h+\theta)\mathrm{d}\psi \mathrm{d}\theta\label{v2 dot ftdc},\\
	&\dot{V}_{f3} = \varsigma \mathbf{e}^T(t-h)\int_{-\varsigma}^{0} \mathbf{L} \Omega _{1}^2(\psi)\mathrm{d}\psi \mathbf{e}(t-h)-\int_{-\varsigma}^{0}A_{d}(\psi)  \nonumber\\
	& \times \mathbf{e}^T(t-h+\psi)\mathrm{d}\psi \mathbf{L} \int_{-\varsigma}^{0}\Omega _{1}(\psi)\mathbf{e}(t-h+\psi)\mathrm{d}\psi. \label{v3 dot ftdc}
	\end{align}
	Now, taking $2{\hat{\mathbf s}}^T {\mathbf K}_2\dot{\mathbf e}_{1h}=2\mathbf{e}^T {\mathbf P \breve{\mathbf B} \mathbf e}_h$, $\int_{-\varsigma}^{0}\Omega _{1}(\psi)\mathbf{e}(t-h+\psi)\mathrm{d}\psi=\int_{-\varsigma}^{0}\Omega _{1}(\psi)\mathbf{J} \begin{bmatrix}
	\mathbf{e}^T(t+\psi) & \mathbf{e}^T(t-h+\psi)
	\end{bmatrix}^T \mathrm{d}\psi$ and combination of (\ref{v dot tarc})-(\ref{v3 dot ftdc}) yield
	\begin{equation}\label{vf dot 1}
	\dot{V}_f(\mathbf{e})\leq -{\mathbf e}_f^T {\mathbf\Theta \mathbf e_f} + \Gamma +2{\hat{\mathbf s}}^T(\Delta \mathbf{u}+\boldsymbol \sigma_1)+2{\hat{\mathbf s}}^T (\Delta \mathbf{u}_h- \Delta \mathbf{u}).
	\end{equation}
	\noindent Therefore, ${\mathbf K_1, \mathbf K_2}$, $h$, and $\varsigma$ are required to be selected in a way such that $\mathbf{\Theta>0}$.
	Since $\Delta {\mathbf{u}}$ is piecewise continuous, then $\exists \Upsilon \in \mathbb{R}^ {+}$ such that the following holds \cite{Ref:24}
	\begin{align}
	\Vert\Delta \mathbf{u}- \Delta \mathbf{u}_h\Vert \leq \Upsilon.  
	\label{new 2}
	\end{align}
	Using (\ref{new 2}), we have the following from (\ref{vf dot 1}):
	\begin{equation}\label{vf dot}
	\dot{V}_f(\mathbf{e})\leq -{\mathbf e}_f^T {\mathbf\Theta \mathbf e_f} + \Gamma +2{\hat{\mathbf s}}^T(\Delta \mathbf{u}+\boldsymbol \sigma_1)+2{\hat{\mathbf s}}^T \Upsilon. 
	\end{equation}
	\textbf{(2) Boundedness of the switching gain of TARC}: 
It can be noted from (\ref{ATRC}) that $\hat{c}$ increases when either $\hat{c}\leq \gamma$ or $f(\mathbf{e}_1)>0$ (i.e., $\Vert \mathbf{s} \Vert$ increases for this case). However, as $\hat{c}(t_0) > \gamma$ and $\hat{c}(t_0)\geq \gamma$ $\forall t \geq t_0$, $\hat{c}$ can attempt to breach its lower bound only when it is associated with a decreasing motion. As a result, to check the boundedness condition of $\hat{c}$ it is sufficient to only consider $f(\mathbf{e}_1)>0$ or $\Vert \mathbf{s} \Vert >  \Vert \mathbf{s}_h \Vert $. This implies $\exists \varpi \in \mathbb{R}^{+}$ such that $\Vert \mathbf{s} \Vert \geq \varpi$ for this case. Then from (\ref{ATRC}) one has
\begin{equation}
\hat{c} \geq \bar{c} \varpi. \label{c low}
\end{equation} 
Now consider $\Vert \mathbf{s} \Vert \geq \epsilon$. From (\ref{vf dot}) one has
\begin{align}
	\dot{V}_f&\leq -{\mathbf e}_f^T {\mathbf\Theta \mathbf e_f}+\Gamma+2{\hat{\mathbf s}}^T(-\alpha \hat{c} {\mathbf{s}}/{\Vert\mathbf{ s} \Vert}+\boldsymbol \sigma_1) +2{\hat{\mathbf s}}^T \Upsilon\nonumber\\
	&\leq -{\mathbf e}_f^T {\mathbf\Theta \mathbf e_f}+\Gamma-2\alpha \hat{c}\Vert\mathbf{s}\Vert +2(\Upsilon+c)\Vert \iota_1 \Vert \Vert{\mathbf e}_f\Vert \nonumber\\
	& \leq - \rho_m V_f -(\lambda_{\min}{(\mathbf\Theta)-z)\Vert\mathbf e_f\Vert^2}-2\alpha \hat{c}\varpi\nonumber\\
	&\quad +2(\Upsilon+c)\Vert \iota_1 \Vert \Vert{\mathbf e}_f\Vert+\Gamma
\end{align}
Thus $\dot{V}_f <0$ is established when
\begin{equation}\label{error bound appen}
	\Vert{\mathbf e_f}\Vert \geq \mu_1+\sqrt{\frac{\Gamma}{(\lambda _{\min}(\mathbf{\Theta)}-z)}+\mu_1^2}:=\varpi_0.
	\end{equation}
Though this error bound is conservative as the effect of $\hat{c}$ is ignored, it eventually helps to put a bound on $\hat{c}$. Therefore, $\dot{V}_f= -\lambda_{\min}{(\mathbf\Theta)\Vert\mathbf e_f\Vert^2}-2\alpha \hat{c}\varpi +2(\Upsilon+c)\Vert \iota_1 \Vert \Vert{\mathbf e_f}\Vert +\Gamma<0$ when 
\begin{equation}
 \alpha \varpi \hat{c} \geq (\Upsilon+c)\Vert \iota_1 \Vert \varpi_0+\Gamma/2. \label{appen bound}
\end{equation}
Let $t_{in}$ be any arbitrary initial time when $\hat{c}$ starts increasing. Then integrating both sides of (\ref{c low}) and using (\ref{appen bound}) one can find that there exist finite time $t_1$ such that 
\begin{align}
t_1 \leq \frac{(\Upsilon+c)\Vert \iota_1 \Vert \varpi_0+\Gamma/2}{\alpha \bar{c} \varpi^2}.
\end{align}
Then using comparison lemma \cite{Ref:24} one has
\begin{align}
\dot{V}_f &\leq -\rho_1 V_f \Rightarrow \dot{V}_f(t) \leq {V}_f(t_{in}+t_1)e^{-(t-t_1)}~ \forall  t \geq t_{in}+t_1, \label{v dec}
\end{align}
where $\rho_1= \frac{\lambda_{\min}(\mathbf\Theta)}{\rho}$. Again, the definition of $\mathbf{e}_f$ yields $\Vert \mathbf{e}_f \Vert \geq \Vert \mathbf e_1 \Vert$. Then, for a $\underline{\rho} \in \mathbb{R}^{+}$ one has
\begin{align}
\underline{\rho} \Vert \mathbf{e}_f \Vert^2 &\leq  V_f \leq \rho \Vert \mathbf{e}_f \Vert^2 \nonumber\\
\Rightarrow \Vert \mathbf e_1 \Vert &\leq \sqrt{{V_f}/{\underline{\rho}}} ~~\forall t.
\end{align}
Let $\Vert \mathbf e_1 (t_{in}+t_1) \Vert =\phi  $. Then $V_f  (t_{in}+t_1) \geq \underline{\rho} \phi^2 $. Since $V_f$ exponentially decreases for $t \geq  t_{in}+t_1$, there exists a finite time $\delta t_1= t- (t_{in}+t_1)$ such that $V_f(t_{in}+t_1+\delta t_1)= \underline{\rho} \phi^2$ implying $\Vert \mathbf e_1 (t_{in}+t_1+\delta t_1) \Vert < \phi  $. Thus $\mathbf{e}_1 \dot{\mathbf e}_1 \leq 0$ would occur and $\Vert \mathbf e_1 \Vert$ will stop increasing. From the definition of $\mathbf s$, it can be inferred that $\hat{c}$ would start decreasing following (\ref{ATRC}) for $t \geq t_{in}+T_1$ where $T_1 \leq t_1+\delta t_1$. The time $\delta t_1$ can be found from (\ref{v dec}) as
\begin{align}
\underline{\rho} \phi^2 & \leq {V}_f(t_{in}+t_1)e^{- \rho_1 \delta t_1} \nonumber \\
\Rightarrow \delta t_1 & \leq \frac{1}{\rho_1}\text{ln} \left(\frac{{V}_f(t_{in}+t_1)}{\underline{\rho} \phi^2} \right).
\end{align}
As $\Vert \mathbf{s} \Vert \leq \Vert \hat{\mathbf{s}} \Vert$, one can verify for that $\Vert \mathbf{s} \Vert \geq \epsilon$ 
\begin{equation}
 \hat{c}(t) \leq \frac{(\Upsilon+c+\bar{c} \delta t_1)\Vert \iota_1 \Vert \varpi_0+\Gamma/2}{\alpha \varpi }:=\hat{c}_M. \label{appen bound0}
\end{equation}

Now consider $\Vert \mathbf{s} \Vert < \epsilon$. From (\ref{ATRC}) and (\ref{vf dot}), one has
\begin{align}
	\dot{V}_f&\leq -{\mathbf e}_f^T {\mathbf\Theta \mathbf e_f}+\Gamma+2{\hat{\mathbf s}}^T(-\alpha \hat{c} {\mathbf{s}}/{\epsilon}+\boldsymbol \sigma_1) +2{\hat{\mathbf s}}^T \Upsilon\nonumber\\
	&\leq -{\mathbf e}_f^T {\mathbf\Theta \mathbf e_f}+\Gamma-2\alpha \hat{c}{\Vert\mathbf{s}\Vert^2}/{\epsilon} +2(\Upsilon+c)\Vert \iota_1 \Vert \Vert{\mathbf e}_f\Vert \nonumber\\
	& \leq - \lambda_{\min}{(\mathbf\Theta)\Vert\mathbf e_f\Vert^2}-2\alpha \hat{c}{\varpi^2}/{\epsilon} +2(\Upsilon+c)\Vert \iota_1 \Vert \Vert{\mathbf e}_f\Vert+\Gamma
\end{align}
Thus, similar to the earlier arguments, $\dot{V}_f<0$ when 
\begin{equation}
 \alpha \frac{\varpi^2}{\epsilon} \hat{c} \geq (\Upsilon+c)\Vert \iota_1 \Vert \varpi_0+\Gamma/2. \label{appen bound1}
\end{equation}
Let $\bar{t}_{in}$ be any arbitrary initial time when $\hat{c}$ starts increasing for $\Vert \mathbf{s} \Vert < \epsilon$. Then integrating both sides of (\ref{c low}) and using (\ref{appen bound1}) one can find that there exist finite time $t_2$ such that 
\begin{align}
t_2 \leq \frac{\epsilon\left( (\Upsilon+c)\Vert \iota_1 \Vert \varpi_0+\Gamma/2\right)}{\alpha \bar{c} \varpi^3}.
\end{align}
Then following the exact procedure derived for $\Vert \mathbf{s} \Vert < \epsilon$, one can find a finite time $\delta t_2=t-(\bar{t}_{in}+t_2)$ such that $\hat{c}$ starts to decrease following (\ref{ATRC}) for $t \geq \bar{t}_{in}+T_2$ where $T_2 \leq t_2+\delta t_2$. Thus, the following results can be obtained:
\begin{align}
 \delta t_2 & \leq \frac{1}{\rho_1}\text{ln} \left(\frac{{V}_f(\bar{t}_{in}+t_2)}{\underline{\rho} \Vert \mathbf e_1 (\bar{t}_{in}+t_2) \Vert^2} \right). \\
 \hat{c}(t) & \leq \frac{\epsilon \left \lbrace (\Upsilon+c+\bar{c} \delta t_2)\Vert \iota_1 \Vert \varpi_0+\Gamma/2 \right \rbrace}{\alpha \varpi^2 }:=\hat{c}_m. \label{appen bound2}
\end{align}
Hence $\hat{c}(t) \leq \max \lbrace \hat{c}_M, ~\hat{c}_m \rbrace :=c^{*}$.

	\textbf{(3) Derivation of $\dot{V}_r$}:
	Evaluating the structures of $\mathbf{s}$, $\Delta \mathbf{s}$ and the condition (\ref{c2}) we have
	\begin{align}
	 \Delta \mathbf{s}^T \mathbf{s} & > \mathbf 0. \label{new 1}
	\end{align}
	Further, the first condition of (\ref{ATRC}) and the initial condition $\hat{c}(t_0) > \gamma$ implies $\hat{c}(t) \geq \gamma$ $\forall t \geq t_0$. Then, by using (\ref{tarc lya}) and (\ref{vf dot}), the stability analysis for (\ref{sys new}) employing TARC is carried out for 
	all the possible four cases as follows:\\
	\par 
	\noindent \textbf{Case (i):} $ \Vert\mathbf{ s} \Vert\geq \epsilon \land \lbrace \hat{c} \leq \gamma \lor f(\mathbf{e}_1)>0\rbrace$.
	
	\indent Utilizing (\ref{delta u}), (\ref{ATRC}) and (\ref{vf dot}), we have 
	\begin{align}
	\dot{V}_r&\leq -{\mathbf e}_f^T {\mathbf\Theta \mathbf e_f}+\Gamma+2{\hat{\mathbf s}}^T(-\alpha \hat{c} {\mathbf{s}}/{\Vert\mathbf{ s} \Vert}+\boldsymbol \sigma_1) +2{\hat{\mathbf s}}^T \Upsilon\nonumber\\
	& \quad +2(\hat{c}-c)\Vert\mathbf{s}\Vert \nonumber\\
	&\leq -{\mathbf e}_f^T {\mathbf\Theta \mathbf e_f}+\Gamma-2(\alpha-1) \hat{c}\Vert\mathbf{s}\Vert-2\alpha \hat{c} \frac{\Delta \mathbf{s}^T\mathbf{s}}{\Vert\mathbf{ s} \Vert}\nonumber \\
	& \quad +2(\Upsilon+c) \Vert{\hat{\mathbf s}}\Vert \nonumber\\
	& \leq -\lambda_{\min}{(\mathbf\Theta)\Vert\mathbf e_f\Vert^2}+2 (\Upsilon+c) \Vert\boldsymbol{\iota}_1\Vert \Vert{\mathbf e_f}\Vert+\Gamma \nonumber, \label{case 1}
	\end{align} 
	\noindent since $\alpha$ is selected as $\alpha \geq 1$ and (\ref{c2}), (\ref{new 1}) hold. $\exists \rho,c^{*} \in \mathbb{R}^{+}$ such that the definition of $V_r$ yields
	\begin{align}
	V_r &\leq \rho \Vert\mathbf e_f\Vert^2+ \zeta, 
    \end{align}	 
	where $\zeta=\frac{{c^{*}}^2+c^2}{\bar{c}}$ and $\hat{c}(t) \leq c^{*}$. Then
	\begin{align}
	\dot{V}_r  \leq & -z \Vert\mathbf e_f \Vert^2 -(\lambda_{\min}{(\mathbf\Theta)-z)\Vert\mathbf e_f\Vert^2}+2 (\Upsilon+c) \Vert\boldsymbol{\iota}_1\Vert \Vert{\mathbf e_f}\Vert\nonumber\\
	&+\Gamma \nonumber\\
	\leq & - \rho_m V_r -(\lambda_{\min}{(\mathbf\Theta)-z)\Vert\mathbf e_f\Vert^2}+2 (\Upsilon+c) \Vert\boldsymbol{\iota}_1\Vert \Vert{\mathbf e_f}\Vert\nonumber\\
	&+\Gamma+ \rho_m \zeta, \nonumber
	\end{align}
	where $\rho_m=z/ \rho$. $\dot{V}_r<0$ is 
	established if $(\lambda_{\min}{(\mathbf\Theta)-z)\Vert\mathbf e_f\Vert^2}>\Gamma+\rho_m \zeta+ 2 (\Upsilon+c) \Vert\boldsymbol{\iota}_1\Vert \Vert{\mathbf e_f}\Vert$.
	Thus, the system 
	is UUB. 
	\par 
	\noindent \textbf{Case (ii):} $\Vert \mathbf{s} \Vert\geq \epsilon \land  f(\mathbf{e}_1) \leq 0$. 
	\begin{align}
	\dot{V}_r&\leq -{\mathbf e}_f^T {\mathbf\Theta \mathbf e_f}+\Gamma+2{\hat{\mathbf s}}^T(-\alpha \hat{c} {\mathbf{s}}/{\Vert \mathbf{s} \Vert}+\boldsymbol{\sigma}_1) +2{\hat{\mathbf s}}^T \Upsilon\nonumber\\
	& \quad - 2({\underline{c}}/{\bar{c}})(\hat{c}-c)\Vert\mathbf{s}\Vert\nonumber\\
	& \leq -\lambda_{\min}({\mathbf\Theta)\Vert\mathbf e_f\Vert^2}+2(\Upsilon+ (1+({\underline{c}}/{\bar{c}}))c) \Vert\boldsymbol{\iota}_1\Vert \Vert{\mathbf e_f}\Vert+\Gamma \nonumber\\
	& \leq  - \rho_m V_r -(\lambda_{\min}{(\mathbf\Theta)-z)\Vert\mathbf e_f\Vert^2}\nonumber\\
	&\quad +2(\Upsilon+ (1+({\underline{c}}/{\bar{c}}))c) \Vert\boldsymbol{\iota}_1\Vert \Vert{\mathbf e_f}\Vert+\Gamma+ \rho_m \zeta \nonumber.
	\label{case 2}
	\end{align}
	\noindent $\dot{V}_r<0$ 
	is achieved if $(\lambda_{\min}{(\mathbf\Theta)-z)\Vert\mathbf e_f\Vert^2}>\Gamma+\rho_m \zeta+ 2(\Upsilon+ (1+({\underline{c}}/{\bar{c}}))c) \Vert\boldsymbol{\iota}_1\Vert \Vert{\mathbf e_f}\Vert$; and the system is 
	UUB. 
	\begin{equation*}
	\end{equation*}
	\par 
	\noindent \textbf{Case (iii):} $\Vert \mathbf{s} \Vert < \epsilon \land \lbrace \hat{c} \leq \gamma \lor f(\mathbf{e}_1)>0\rbrace $.
	
	Since $\Vert\mathbf{s}\Vert<\epsilon$ and $\hat{c} \leq c^{*}$, one has
	\begin{align}
	\dot{V}_r&\leq -{\mathbf e}_f^T {\mathbf\Theta \mathbf e_f}+\Gamma+2{\hat{\mathbf s}}^T(-\alpha \hat{c} {\mathbf{s}}/{\epsilon}+\boldsymbol{\sigma_1})+2{\hat{\mathbf s}}^T \Upsilon\nonumber\\
	&\quad +2(\hat{c}-c)\Vert\mathbf{s}\Vert\nonumber\\
	& \leq -\lambda_{\min}{(\mathbf\Theta)\Vert\mathbf e_f\Vert^2}-(2\alpha\hat{c}{\Vert\mathbf{s}\Vert^2}/{\epsilon })-2\alpha\hat{c}\frac{\Delta \mathbf{s}^T \mathbf{s}}{\epsilon }+\Gamma\nonumber\\
	& \quad +2(\Upsilon+c) \Vert{\hat{\mathbf s}}\Vert+2\hat{c}\Vert\mathbf{s}\Vert \nonumber\\
	& \leq  - \rho_m V_r -(\lambda_{\min}{(\mathbf\Theta)-z)\Vert\mathbf e_f\Vert^2} +2 (\Upsilon+c) \Vert\boldsymbol{\iota}_1\Vert \Vert{\mathbf e_f}\Vert \nonumber \\
	&\quad +\Gamma+ \rho_m \zeta+ 2\epsilon c^{*}. \label{case 41}
	\end{align}
%
	\noindent  Thus, the system is UUB. 
	%
	%
	\par 
	\noindent\textbf{Case (iv):}  $\Vert \mathbf{s} \Vert < \epsilon \land  f(\mathbf{e}_1) \leq 0$. 
	\begin{align}
	\dot{V}_r& \leq -{\mathbf e}_f^T {\mathbf\Theta \mathbf e_f}+\Gamma+2{\hat{\mathbf s}}^T(-\alpha \hat{c} {\mathbf{s}}/{\epsilon}+\boldsymbol{\sigma}_1) +2{\hat{\mathbf s}}^T \Upsilon\nonumber\\
	& \quad - 2({\underline{c}}/{\bar{c}})(\hat{c}-c)\Vert\mathbf{s}\Vert\nonumber\\
	& \leq -\lambda_{\min}({\mathbf\Theta)\Vert\mathbf e_f\Vert^2}+2(\Upsilon+ (1+({\underline{c}}/{\bar{c}}))c) \Vert\boldsymbol{\iota}_1\Vert \Vert{\mathbf e_f}\Vert+\Gamma. \label{case 5}
	\end{align}
	Stability of this condition is exactly similar to Case (ii) and therefore, the system remains UUB.
%
%
%
\end{proof}

\subsection{Selection of Parameters} \label{sec:parameter}
{For stability, one needs to select the parameters $\mathbf Q,{\mathbf K_1, \mathbf K_2}, h,\varsigma, \alpha, \beta,\xi, \mathbf{D}$ and $\mathbf{L}$ such that (\ref{c2}) and (\ref{delay value 1}) are satisfied. Amongst them, $\beta,\xi, \mathbf{D}$ and $\mathbf{L}$ are solely used for analytical purpose. Note that there are many possible combinations of parameters which can satisfy the aforementioned conditions. Henceforth, a designer has the flexibility to select any of the combinations according to the application requirements. Nevertheless, a certain design procedure can be considered while selecting the parameters as follows: 
	
	As equation (\ref{sys}) represents the second order system, the controller gains $\mathbf K_1,\mathbf K_2$ are generally selected as $\mathbf K_1=\omega_n^2 \mathbf I$ and $\mathbf K_2=2 \zeta \omega_n \mathbf I$, where $\omega_n$ and $\zeta$ are the desired natural frequency and damping ratio, respectively, for the unperturbed (or nominal) error dynamics \cite{Ref:7,Ref:9,Ref:9_1}. While the designer can choose any $\zeta$ and $\omega_n$ according to the requirement, $\zeta=1$ is generally selected to achieve a critical damping behaviour of the nominal error dynamics \cite{Ref:7,Ref:9,Ref:9_1}. The sampling time $h$ is set to the minimum achievable value in a digital controller, e.g., $h = 1$ms, thus always known a priori. After selecting these parameters, rest of the parameters, i.e., $\varsigma,\beta,\xi, \mathbf{D,L}$ are to be selected in a way such that (\ref{delay value 1}) is satisfied.}

{ Apart from the aforementioned parameters, the two important user defined scalars $\bar{c}$ and $\underline{c}$ in (\ref{ATRC}) govern the adaptation rate of $\hat{c}$ while increasing and decreasing, respectively. Note that for system stability, it is sufficient to select $\bar{c}>0$ and $\underline{c}>0$. The larger values of $\bar{c}$ allow $\hat{c}$ to respond more quickly according to the incurred error. However, if $\bar{c}$ is too high then $\hat{c}$ becomes excessively high which may in turn invite chattering. Similarly, if $\underline{c}$ is too high then $\hat{c}$ becomes excessively low which may deteriorate the tracking accuracy due to the low value of switching gain than the required amount. On the other hand, a very low value of $\bar{c}$ disrupts $\hat{c}$ to counter the uncertainties properly and may result in a high tracking error. In addition, a very small value of $\underline{c}$ results in higher value of $\hat{c}$ than the sufficient amount resulting high control input. Thus, one needs to select these parameters according to the application requirements in practical circumstances.}

\section{Conclusion}
{In this endeavour, a novel adaptive-robust control law, TARC has been proposed for uncertain Euler-Lagrange systems which approximates unknown dynamics through the time-delayed estimation technique and negotiates the approximation error by adaptive-robust control without any prior knowledge of the uncertainty bounds. The proposed adaptive law of TARC overcomes the over- and under-estimation problems of the switching gain. Additionally, to enhance the practicality of TARC, the velocity and acceleration feedback terms are estimated from the previous position information; thus only position information is sufficient for the overall controller design. A new stability approach in the continuous-time domain analyses the overall closed-loop system including the TARC and the state-derivatives estimator. Moreover, in-depth discussion on selections of the controller gains and sampling interval (time delay) is established via the stability analysis, which is of importance for the time-delayed estimation based controllers. 

\end{document}